\documentclass[journal]{IEEEtran}

\IEEEoverridecommandlockouts
\usepackage{amsmath}
\usepackage{amsfonts}
\usepackage{amssymb}
\usepackage{graphicx}
\usepackage{subcaption}
\usepackage{epstopdf}
\usepackage{cite}
\usepackage{algorithmicx}
\usepackage[ruled]{algorithm} 
\usepackage{algpseudocode}
\usepackage{blindtext}
\usepackage{enumitem}
\usepackage{pdfpages}
\usepackage{bbm}
\usepackage{bibentry}
\usepackage{amsthm}
\usepackage{color}
\usepackage{mathtools}
\newtheorem{theorem}{Theorem}
\newtheorem{lemma}[theorem]{Lemma}
\usepackage[normalem]{ulem}

\begin{document}
\bstctlcite{IEEEexample:BSTcontrol}
%
\title{Digital and Hybrid Precoding and RF Chain Selection Designs for Energy Efficient Multi-User MIMO-OFDM ISAC Systems}

\author{Po-Chun Kang, Ming-Chun Lee,~\IEEEmembership{Member,~IEEE},  Tzu-Chien Chiu, Ting-Yao Kuo and Ta-Sung Lee, \IEEEmembership{Fellow, IEEE}
\thanks{P.-C. Kang, M.-C. Lee, T.-Z. Chiu, T.-Y. Kuo and T.-S. Lee are with Institute of Communications Engineering, National Yang Ming Chiao Tung University, Hsinchu 30010, Taiwan. (email: kangpopo2255.ee11@nycu.edu.tw, mingchunlee@nycu.edu.tw, howard891108@gmail.com, guo.ting.Yao.ee13@nycu.edu.tw, tslee@nycu.edu.tw)}
}

\maketitle

\begin{abstract}
Using multiple-input multiple-output (MIMO) with orthogonal frequency division multiplexing (OFDM) for integrated sensing and communication (ISAC) has attracted considerable attention in recent years. While most existing works focus on improving MIMO-OFDM ISAC performance, the impact of transmit power and radio-frequency (RF) circuit power consumption on energy efficiency (EE) remains relatively underexplored. To address this gap, this paper investigates joint precoding and RF chain selection for multi-user MIMO-OFDM ISAC systems, and develops energy-efficient designs for both fully digital and hybrid precoding architectures through the joint optimization of precoding and RF-chain activation. Specifically, we first formulate a novel EE maximization problem subject to sensing performance constraints. Then, efficient optimization algorithms are proposed for both architectures, together with analyses of their computational complexity and convergence behavior. Building on the proposed approaches, spectral efficiency–power consumption tradeoff designs are also provided. Simulation results demonstrate that, compared with existing schemes, the proposed approaches achieve significant improvements in the EE–sensing tradeoff for ISAC systems.
\end{abstract}

\vspace{-3pt}
\IEEEpeerreviewmaketitle
\begin{IEEEkeywords}
Integrated sensing and communication, energy efficiency, MIMO-OFDM, multi-user precoding.
\end{IEEEkeywords}
\vspace{-10pt}

\section{Introduction}

To enable emerging applications and alleviate spectrum competition between communication and sensing systems, the concept of integrated sensing and communication (ISAC) has been extensively investigated in both academia and industry \cite{ZhangJCS2022,Lu2024ISAC10}. Among various ISAC architectures, communication-centric systems that employ conventional orthogonal frequency division multiplexing (OFDM) waveforms to support both radar sensing and data transmission have attracted considerable attention \cite{braun2014ofdm,carvajal2020comparison,dai2025tutorial}. Meanwhile, the increasing complexity of wireless communication systems and the growing performance demands of modern applications have led to a significant rise in system energy consumption, making energy-efficient system design a critical challenge \cite{Zhang20196G}. From a physical-layer perspective, radio-frequency (RF) circuit power often constitutes a substantial portion of the total energy consumption. Therefore, improving energy efficiency (EE) requires not only appropriate transmit power control but also careful management of circuit power, for example, by deactivating unnecessary RF chains \cite{Gao2016EE}. This observation has motivated a growing body of work that investigates the EE of ISAC systems \cite{he2022energy, ZouEEISAC2022, zou2024energy, allu2024robust, singh2024energy,kaushik2021hardware,kaushik2022green, dizdar2022energy,WuEEISAC2024}.

\subsection{Literature Review}

To address the EE issue in ISAC systems, \cite{he2022energy,ZouEEISAC2022,zou2024energy} took the transmit power consumption into account and designed beamforming focusing on improving EE. Their results showed that EE-oriented designs could be fairly different from the throughput- and sensing-oriented designs. Later, similar investigations in \cite{allu2024robust, singh2024energy} which additionally consider channel uncertainty and hybrid beamforming architecture also showed that EE can be improved by using EE-based beamforming designs, as compared to designs without regard to EE. 

When the number of antennas increases, the energy consumption of RF circuits becomes unignorable. Hence, works have also explored the joint design of precoding and RF chain selection, which further exploits the ON/OFF switching of RF chains to enhance EE \cite{kaushik2021hardware, kaushik2022green, dizdar2022energy}. Specifically, in \cite{kaushik2021hardware}, a joint design of hybrid precoder and RF chain selection was proposed which showed that the best EE can only be achieved by having minimum hardware cost. Then, in \cite{kaushik2022green}, the investigation in \cite{kaushik2021hardware} was extended to considering the resolution of the digital-to-analog converters (DACs) when jointly designing with the RF chain selection and hybrid precoder. Similarly, in \cite{dizdar2022energy}, considering the rate-splitting multiple access (RSMA) and low-resolution DACs in RF chains to reduce the power consumption of each RF chain, the relevant energy efficient joint precoding and RF chain selection designs were proposed. Furthermore, in \cite{WuEEISAC2024}, when considering the Internet of Things (IoT) devices along with their EE-oriented designs, the ON-OFF mechanism of the complete baseband circuit was introduced to reduce the energy consumption. While \cite{he2022energy, ZouEEISAC2022, zou2024energy, allu2024robust, singh2024energy,kaushik2021hardware, dizdar2022energy, kaushik2022green,WuEEISAC2024} investigated EE designs considering both precoding and hardware energy consumptions with various approaches and architectures, all of them considered simple narrowband MIMO ISAC systems, leaving the wideband MIMO ISAC systems untreated from the EE aspect.

On the other hand, OFDM-based ISAC systems, serving as promising wideband ISAC systems, have been studied in the past several years. For example, the fundamental tradeoff between sensing and communication in OFDM systems was investigated in \cite{huang2023capacity}; the peak-to-average power ratio (PAPR) issue was studied in \cite{huang2022designing}; and \cite{hsu2021analysis} investigated the pilot power allocation and placement. To achieve better ISAC performance, MIMO-OFDM ISAC systems that employ antenna arrays have been extensively discussed. For example, different target parameter estimation approaches were discussed in \cite{xu2020joint, xiao2024novel}. In addition, beamforming and precoding designs were investigated. For example, a Kullback-Leibler divergence (KLD)-based approach was proposed in \cite{tian2021transmit} to joint design transmit and receive beamformers. 
To enhance the MIMO-OFDM ISAC systems under stringent sensing requirement, \cite{nguyen2023multiuser} proposed a subcarrier allocation strategy with the corresponding multi-user hybrid precoding. 
To enable spatial multiplexing for MIMO-OFDM ISAC systems, \cite{wei2024precoding} studied the precoding optimization by maximizing the sensing mutual information while ensuring a required communication signal-to-interference-plus-noise ratio (SINR). In \cite{Xiao2025SpaseISAC}, to leverage the sparsity of echo signals, a joint transmit beamforming and target parameter estimation approach was proposed. As the high PAPR is a headache for MIMO-OFDM ISAC systems, \cite{Hu2022low} proposed a low-PAPR waveform design. Since the MIMO-OFDM ISAC system designs are commonly based on non-unified views for communication and sensing, \cite{wei2023waveform} studied the design under a unified mutual information viewpoint.

While using abstract design criteria, e.g., SINR, mutual information, and Cram\'er-Rao bound (CRB), the design problems could be more tractable. These approaches might not directly correspond to the detection procedure. To resolve this issue, \cite{liao2024beamforming,chou2024robust,liao2025design} investigated the range-Doppler map, and then devised beamforming approaches for MIMO-OFDM ISAC systems. Similarly, as the impact of modulation symbols could be overlooked in approaches relying on the high-level criteria, symbol-level precoding techniques that can incorporate the impact of symbols were discussed in \cite{liu2021dual,Li2025ISACSidelobe,Wu2025LowISAC}. 
Recently, to benefit from the spare array architectures that have been long investigated in the area of array signal processing, MIMO-OFDM ISAC systems in consideration of sparse arrays have started to draw attentions and been discussed in \cite{liu2024joint} and \cite{Li2025Sparse}.

\subsection{Contributions}

While numerous studies on MIMO-OFDM ISAC systems have been published, to the best of our knowledge, energy efficient design remains largely unaddressed, with existing EE studies primarily focusing on narrowband MIMO ISAC systems. On the other hand, to significantly enhance the EE of ISAC systems, the adoption of joint precoding and RF chain selection strategies is necessary, as transmit power consumption and hardware cost should be jointly considered. However, extending existing narrowband approaches to wideband MIMO-OFDM ISAC systems is non-trivial, as different subcarriers may experience distinct channel fading and careful consideration is required to ensure reliable sensing performance when precoding across subcarriers. Thus, a significant gap remains in the current ISAC literature concerning energy efficient design, and this is an especially critical issue given the pivotal role that MIMO-OFDM ISAC systems are expected to play in next-generation wireless networks \cite{Luo2025ISACStandard}.

To fill the gap, we in this paper conduct a comprehensive investigation on the joint precoding and RF chain selection approaches for energy-efficient multi-user MIMO-OFDM ISAC systems, where fully digital (FD) precoding and hybrid precoding of fully-connected (FC) and partially-connected (PC) architectures are considered. To this end, we first investigate the EE model and sensing design criterion, and then formulate the FD precoding and RF chain selection problem aiming to maximize EE while guaranteeing sensing performance. Subsequently, as the design problem is a nonlinear mixed-integer problem which is notoriously difficult to solve, we propose a design framework that solves the problem by first approximating the energy consumption model using the hyperbolic tangent function, and then transforming the problem into a sequence of subproblems, whose solutions are obtainable, with an iterative manner. To compensate with the proposed framework which relies on the hyperbolic tangent function, the greedy-based and brute-force-based approaches without using the approximation are also proposed at the cost of higher complexity. Since our precoding design approach for EE maximization can be easily extended to obtain a spectral efficiency (SE)-power consumption tradeoff design, the relevant tradeoff design approach is provided.

Equipped with the design approach for the FD precoding, design approaches for hybrid precoding and RF chain selection under both FC and PC architectures are proposed. The fundamental ideas of devising approaches under FC and PC architectures are to first derive their FD precoding and RF chain selection counterparts, and then let the analog and digital precoding and RF chain selection designs to approach their FD counterparts under revised RF circuit energy consumption models. Convergence and complexity of our proposed design approaches are analyzed. Computer simulations are conducted to validate the efficacy of our approaches. Results show that our approaches can provide much better EE-sensing tradeoff than the reference schemes in the literature. Furthermore, results also validate that our proposed approaches can appropriately adjust the ON/OFF switching of RF chains. The main contributions of the paper are summarized as follows:
\begin{itemize}
    \item To fill the gap that joint precoding and RF chain selection for EE maximization in multi-user MIMO-OFDM ISAC systems has not been explored, we conduct a study on the design approaches for both FD precoding and hybrid precoding architectures. To the best of our knowledge, this paper is the first to conduct such investigation.
    \item By investigating the EE model and sensing criterion, joint precoding and RF chain selection design problems for EE maximization under MIMO-OFDM ISAC systems of FD precoding and hybrid precoding architectures are proposed. To solve the problems, novel solution approaches are proposed. Also, the extension to SE-power consumption tradeoff design is provided. 
    \item Theoretical convergence and complexity analysis of our proposed approaches are presented. In addition, computer simulations are conducted to evaluate our approaches. Results clearly indicate the superiority of our proposed design approaches as compared to reference schemes.
\end{itemize}

\section{System Model}
\label{sec: 2}

In this paper, we consider a base station (BS) equipped with a MIMO-OFDM ISAC transmitter that simultaneously serves $N_\text{UE}$ communication users and senses $N_\text{tar}$ targets. The BS is equipped with $N_\text{t}$ transmit antennas and $N_\text{r}^\text{sen}$ receive radar antennas for receiving the echoes from sensing targets. Each user is equipped with $N_\text{r}$ receive antennas to receive communication signals from the BS. We assume that the point-target assumption holds and consider that all targets to sense are line-of-sight (LoS) targets.

%
%

\subsection{Communication Signal Model}
\label{sec: 2-2}
We consider that the OFDM signal includes $N_\text{sub}$ subcarriers and $N_\text{sym}$ OFDM symbols, and the BS aims to send each user $N_s$ data streams in each subcarrier. Then, with the standard MIMO-OFDM processing procedure, the equivalent received baseband signal for user $u$ on subcarrier $k$ of OFDM symbol $l$ can be expressed as:
\begin{equation}
\label{eq: userRxSignal}
\begin{aligned}
    \mathbf{y}_{k,l}^{u} &= \mathbf{H}_{k,u}\left(\mathbf{F}_{k,u}\mathbf{s}_{k,l}^{u} + \sum_{i=1, i\neq u}^{N_\text{UE}}\mathbf{F}_{k,i}\mathbf{s}_{k,l}^{i}\right) + \mathbf{n}_{k,l}^{u},
\end{aligned}
\end{equation}
where 
$\mathbf{H}_{k,u}$, $\mathbf{F}_{k,u}$, and  $\mathbf{s}_{k,l}^{u}$ are the channel matrix, precoding matrix, and transmitted symbol for user $u$ on subcarrier $k$ and OFDM symbol $l$, respectively, and $\mathbf{n}_{k,l}^{u}$ is the white Gaussian noise with variance being $\sigma_\text{n}^2$. Note that here we consider that multiple OFDM symbols share the same precoder and channel matrices, reflecting the practice where the precoder typically changes only once across several symbols and the channel coherence time is larger than the symbol duration of the OFDM symbol. That being said, extending our proposed design approach below to other considerations is feasible and straightforward. We let $\mathbf{F}_k = \left[\mathbf{F}_k^{1},\ldots,\mathbf{F}_k^{N_\text{UE}}\right]$ and $\mathbf{F} = [\mathbf{F}_1,\ldots,\mathbf{F}_{N_\text{sub}}]$. Then, with (\ref{eq: userRxSignal}), the SE is given as: 
\begin{equation}
\label{eq: SE}
    R\left(\mathbf{F}\right)=\frac{\displaystyle{\sum_{u=1}^{N_\text{UE}}\sum_{k=1}^{N_\text{sub}}}\log\left|\mathbf{I}_{N_\text{r}}+\mathbf{R}_{\text{in},k,u}^{-1}\mathbf{H}_{k,u}\mathbf{F}_{k,u}\left(\mathbf{F}_{k,u}\right)^H\mathbf{H}_{k,u}^H\right|}{N_\text{sub}},
\end{equation}
where 
\begin{equation}
    \mathbf{R}_{\text{in},k,u} = \sum_{i=1, i\neq u}^{N_\text{UE}}\mathbf{H}_{k,u}\mathbf{F}_{k,i}\left(\mathbf{F}_{k,i}\right)^H\mathbf{H}_{k,u}^H+\sigma_\text{n}^2\mathbf{I}_{N_\text{r}}
\end{equation}
is the interference-plus-noise covariance matrix of user $u$.

\subsection{Sensing Signal Model}
\label{subsec: 2-3}
When there are $N_\text{tar}$ targets, by applying again the standard OFDM processing, the received radar signal on subcarrier $k$ and OFDM symbol $l$ is expressed as: 
\begin{equation}
\label{eq: radarRxSignal}
\begin{aligned}
    \mathbf{y}^{\text{Ra}}_{k,l} =& \sum^{N_\text{tar}}_{t=1}\beta_t\mathbf{a}_\text{r}\left(f_k,\theta_t\right)\mathbf{a}_\text{t}^H\left(f_k,\theta_t\right)\mathbf{F}_k\\
    &\cdot\mathbf{s}_{k,l}e^{-j2\pi k\Delta f\tau_t}e^{j2\pi f_{\text{D},t}l(T_\text{sym}+T_\text{CP})} + \mathbf{n}^{\text{Ra}}_{k,l},
\end{aligned}
\end{equation}
where $\mathbf{s}_{k,l}=[\mathbf{s}_{k,l}^{1},\ldots, \mathbf{s}_{k,l}^{N_\text{UE}}]$; $\beta_t$, $\mathbf{a}_\text{r}$, $\mathbf{a}_\text{t}$, $\tau_t$, $f_{\text{D},t}$ are the radar cross-section (RCS), receive steering vector, transmit steering vector, delay, and Doppler shift of target $t$, respectively; and $T_\text{sym}$, $T_\text{CP}$, and $\mathbf{n}^{\text{Ra}}_{k,l}$ are OFDM symbol period,  time duration of cyclic prefix, and sensing noise vector on subcarrier $k$ of OFDM symbol $l$ with variance being $\sigma_\text{n}^{\text{sen}^2}$, respectively. Here, we consider uniform linear array for simplicity. Thus, $[\mathbf{a}_\text{r} \left(f_k,\theta_t\right)]_{n_\text{r}} =e^{-j2\pi f_k(n_\text{r}-1)d_\text{rx}\sin\theta_t}$ and $[\mathbf{a}_\text{t}\left(f_k,\theta_t\right)]_{n_\text{t}} = e^{-j2\pi f_k(n_\text{t}-1)d_\text{tx}\sin\theta_t}$, where
$f_k$ is the subcarrier frequency of subcarrier $k$. Note that extending to other antenna array structure is straightforward. 

We consider the division-FFT-based radar signal sensing procedure proposed in \cite{xiao2024novel} to conduct the target detection along with their range, Doppler velocity, and angle estimations. The idea is to first remove the undesirable precoded symbols $\mathbf{x}_{k,l}=\mathbf{F}_k\mathbf{s}_{k,l}$ from the received signal, and then detect the targets. To do this, we first apply the received beamforming to each subcarrier and angle with the steering vectors, transforming the received signals from the antenna domain to the angle domain. We suppose that the receive beamforming with angle $\theta_m$ is used, leading to: 
\begin{equation}
\begin{aligned}
    &\tilde{y}_{m,k,l} =\mathbf{a}_\text{r}^H\left(f_k,\theta_m\right)\mathbf{y}^{\text{Ra}}_{k,l}
    \approx N_\text{r}^\text{sen}\mathbf{a}^H_{\text{t}}(f_k,\theta_m)\mathbf{x}_{k,l}\\
    &\qquad\cdot\sum_{t\in\mathcal{T}_m} \beta_t e^{j2\pi k\Delta f\tau_t}e^{j2\pi f_{D_t}lT}+\mathbf{a}_\text{r}^H\left(f_k,\theta_m\right)\mathbf{n}_{k,l},
\end{aligned}
\end{equation}
where $\mathcal{T}_m = \{t:|\theta_t - \theta_m| <\frac{\epsilon_\text{a}}{2}, 
         \forall t = 1\ldots, N_\text{tar}\}$ and $\epsilon_\text{a}$ is a tolerable angle approximation error.   Then, we remove the symbols $\mathbf{x}_{k,l}$ from $\tilde{y}_{m,k,l}$ by division:
\begin{equation}
    \begin{aligned}
        &z_{m,k,l}=\tilde{y}_{m,k,l} / \alpha_m\mathbf{a}^H_{\text{t}}(f_k,\theta_m)\mathbf{x}_{k,l} \\
        &=\frac{N_\text{r}^\text{sen}}{\alpha_m}\sum_{t\in\mathcal{T}_m} \beta_t e^{j2\pi k\Delta f\tau_t}e^{j2\pi f_{D_t}lT} +\frac{\mathbf{a}_\text{r}^H\left(f_k,\theta_m\right)\mathbf{n}_{k,l}}{\alpha_m\mathbf{a}^H_{\text{t}}(f_k,\theta_m)\mathbf{x}_{k,l}},
    \end{aligned}
\end{equation}
where $\alpha_{m}=\sqrt{\frac{\sum_{k,l}|\tilde{y}_{m,k,l} / \mathbf{a}^H_{\text{t}}(f_k,\theta_m)\mathbf{x}_{k,l}|^2}{\sum_{k,l}|\tilde{y}_{m,k,l}|^2}}$ is the normalization factor for angle $\theta_m$, making the total received power from each angle remains unchanged after division \cite{xiao2024novel}. Next, we transform the signal from the time-frequency domain to the delay-Doppler domain by applying IFFT and FFT. This creats the delay-Doppler (RD) map given as:
\begin{equation}
\label{eq: RDSignal}
    \begin{aligned}
        Z_{m,k',l'}&=\frac{1}{\sqrt{N_\text{sub}N_\text{sym}}}\sum_{k=1}^{N_\text{sub}}\sum_{l=1}^{N_\text{sym}}z_{m,k,l}e^{j2\pi\frac{k'}{N_\text{sub}}k}e^{-j2\pi\frac{l'}{N_\text{sym}}l}\\
        &\approx\frac{N_\text{r}^\text{sen}\sqrt{N_\text{sub}N_\text{sym}}}{\alpha_m}\sum_{t=\mathcal{T}_{m,k',l'}} \beta_t+N_{m,k',l'},
    \end{aligned}
\end{equation}
where $\mathcal{T}_{m,k',l'} = \{t:|\theta_t-\theta_m|\leq\frac{\epsilon_\text{a}}{2}\land|\frac{\tau_t}{\epsilon_\text{d}} - k'|\leq\frac{\epsilon_\text{d}}{2} \land
         |\frac{f_{\text{D},t}}{\epsilon_\text{D}} - l'|\leq\frac{\epsilon_\text{D}}{2}, 
         \forall t = 1\ldots, N_\text{tar}\}$, $\epsilon_\text{d}=\frac{1}{N_\text{sub}\Delta f}$ is the delay resolution,  $\epsilon_\text{D}=\frac{1}{N_\text{sym}(T_\text{sym}+T_\text{CP})}$ is the Doppler resolution, and 
\begin{equation}
\label{eq: RDNoise}
\begin{aligned}
    N_{m,k',l'}&= \frac{1}{\sqrt{N_\text{sub}N_\text{sym}}}\sum_{k=1}^{N_\text{sub}}\sum_{l=1}^{N_\text{sym}}\frac{\mathbf{a}_\text{r}^H\left(f_k,\theta_m\right)\mathbf{n}_{k,l}}{\alpha_m\mathbf{a}^H_{\text{t}}(f_k,\theta_m)\mathbf{x}_{k,l}} \\
    &\quad\cdot e^{j2\pi\frac{k'}{N_\text{sub}}k}e^{-j2\pi\frac{l'}{N_\text{sym}}l},
\end{aligned}
\end{equation}
is the equivalent RD map Gaussian noise on the delay-Doppler map with zero mean and variance being 
\begin{equation}
\label{eq: RDNoiseDistribution}
    \mathrm{Var}\left(N_{m, k',l'}\right)= \frac{N_\text{r}^\text{sen}\sigma_\text{n}^{\text{sen}^2}}{\alpha_m^2N_\text{sub}N_\text{sym}}\sum_{k=1}^{N_\text{sub}}\sum_{l=1}^{N_\text{sym}}|\mathbf{a}^H_{\text{t}}(f_k,\theta_m)\mathbf{x}_{k,l}|^{-2}.
\end{equation}
Finally, to detect the targets, we apply the cell averaging constant false alarm rate (CA-CFAR) detection separately on the RD map created by using different beamforming angle $\theta_m$, where the RD bin $(k',l')$ with large enough power would be detected as that a target exists. We denote $\mathcal{M}$ as the set of beamforming angles. Then, by applying the above procedure to all beamforming angle $\theta_m,\forall m\in\mathcal{M}$, we can detect targets on different angles with different ranges and Doppler velocities. Furthermore, as we might assume that the resolution on the RD domain is small enough so that there exits at most a single target for a specific range and Doppler velocity pair, whenever the same RD bin is detected on different angles, we only consider the one gives the largest power as the correct angle, matching the point-target assumption at the beginning of this section. Note that such manipulation might not be needed if the point-target assumption is not used, according to \cite{xiao2024novel}. 
%

\subsection{Power Consumption and Energy Efficiency Models}
\label{subsec: 2-4}
When turning on/off the RF chains, the circuit power consumption could be greatly changed. According to \cite{li2020dynamic}, the total power consumption of the considered MIMO-OFDM transmitter can be computed as:
\begin{equation}
    \label{eq: totalPower}
    P_\text{total}\left(\mathbf{F}\right) = \sum_{k=1}^{N_\text{sub}}\|\mathbf{F}_k\|_\text{F}^2 / \eta_\text{PA}+P_\text{BB} + P_\text{RF}\sum_{i=1}^{N_\text{RF}}u\left(\|\mathbf{F}\left(i,:\right)\|_2\right),
\end{equation}
where $N_\text{RF}$ is the number of RF chains and $\eta_\text{PA}$ is the power amplifier efficiency; $P_\text{BB}$ and $P_\text{RF}$ are the power consumption of the baseband precoder and an activated RF chain, respectively; and the function $u(t)$ is the unit step function whose value is $1$ if $t>0$, indicating that a RF chain might be turned off if there is no transmission on that RF chain. Note that the first term in (\ref{eq: totalPower}) is the transmission power consumption, and the third term is the total power consumed by activating RF chains. Subsequently, with (\ref{eq: SE}) and (\ref{eq: totalPower}), the energy efficiency (EE) of the MIMO-OFDM ISAC system is expressed as $EE\left(\mathbf{F}\right) = \frac{R\left(\mathbf{F}\right)}{P_\text{total}\left(\mathbf{F}\right)}$.

\subsection{Transmit Antenna Array Architectures}

In this paper, we study the EE designs under three different transmit antenna array architectures shown in Fig. \ref{fig:AntArt}, which includes the FD precoding, FC hybrid precoding, and PC hybrid precoding architectures. With different antenna array architectures, the aforementioned system and signal models mostly apply and the changes only happen on the mathematical structure of $\mathbf{F}_k$. In the FD precoding, there is no special structure. On the other hand, when hybrid precoding is adopted, $\mathbf{F}_k$ needs to be further split into the baseband digital precoding and analog precoding parts, where the analog pecoding is composed of phase shifters whose exact structures are different for FC and PC hybrid precoding. As shown in Fig. \ref{fig:AntArt}, with the FC architecture, a RF chain is connected to all antennas through independent phase shifters. Differently, with the PC architecture, each RF chain is connected to an independent set of $\frac{N_\text{t}}{N_\text{RF}}$ phase shifters and antennas, constituting a subarray for this RF chain, where $\frac{N_\text{t}}{N_\text{RF}}$ is set to be an integer in this paper for simplicity.

\begin{figure}[t]
\centering
\begin{subfigure}{0.45\textwidth}
    \includegraphics[width=\linewidth]{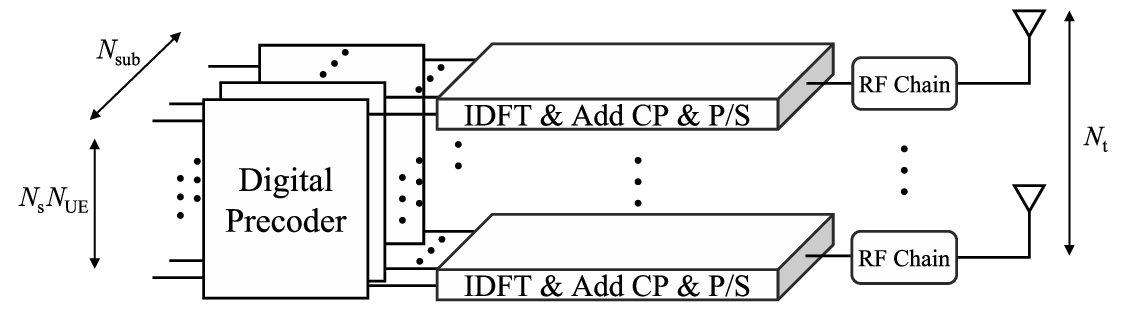}
    \caption{FD precoding architecture.}
    \label{fig:FD}
\end{subfigure}
\hfill
\begin{subfigure}{0.45\textwidth}
    \includegraphics[width=\linewidth]{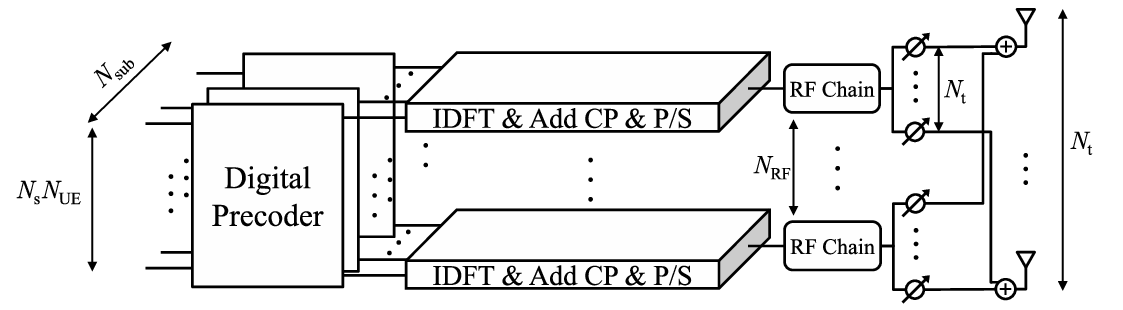}
    \caption{FC hybrid precoding architecture.}
    \label{fig:FC}
\end{subfigure}
\vspace{0.3em}
\begin{subfigure}{0.45\textwidth}
    \includegraphics[width=\linewidth]{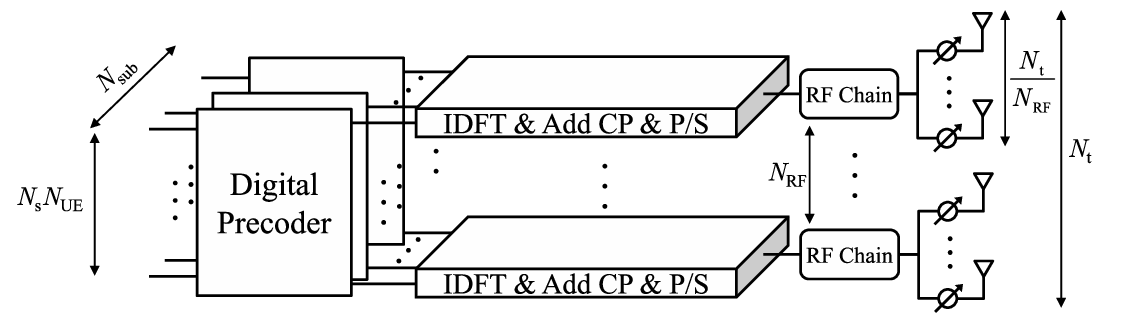}
    \caption{PC hybrid precoding architecture.}
    \label{fig:PC}
\end{subfigure}
\caption{Illustrations of different antenna architectures.}
\vspace{-10pt}
\label{fig:AntArt}
\end{figure}

\section{Energy Efficient Digital Precoding and RF Chain Selection Approaches}
\label{sec_3}

Our goal in this paper is to develop energy efficient precoding and RF chain selection approach for both conventional FD and hybrid precoding architectures. To this end, we in this section start with the FD precoding. We first formulate an EE maximization problem that facilitates the joint optimization of precoding and RF chain selection for the ISAC system. Then, approaches to solve the problem are proposed and discussed.

\subsection{Problem Formulation}

To derive the EE maximization problem for the ISAC system, we need design criteria for both EE and radar sensing. Then, we see that the detection rate under the detection procedure in Sec. \ref{sec: 2} heavily relies on the signal and noise powers of the RD maps. In addition, from  (\ref{eq: RDNoiseDistribution}), it can be seen that the noise power is inversely proportional to $|\mathbf{a}^H_{\text{t}}(f_k,\theta_m)\mathbf{F}_k\mathbf{s}_{k,l}|^2$. Hence, we define the beam power toward angle $\theta$ at subcarrier $k$ as:
\begin{equation}
\label{eq: beampower}
    B_k\left(\mathbf{F}_k, \theta\right) =\|\mathbf{F}_{k}^H\mathbf{a}\left(f_k,\theta\right)\|^2.
\end{equation}
Then, by following the detection probability analysis similar to that in \cite{hsu2021analysis}, the following lemma can be obtained:
\begin{lemma}
\label{thm: powerCon}
Suppose that a target exists at angle $\theta$. Let $P_\text{th}$ be some lower bound satisfying $B_k\left(\mathbf{F}_k, \theta\right)\geq P_\text{th}, \forall k$. When the transmitted data symbols satisfy the i.i.d. standard Gaussian distribution assumption, the detection probability can be improved if the lower bound $P_\text{th}$ is improved. 
\end{lemma}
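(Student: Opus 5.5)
The plan is to tie the detection probability of the CA-CFAR detector on the RD map in (\ref{eq: RDSignal}) to a per-bin SNR, and then show that this SNR is bounded below by a quantity that increases with $P_\text{th}$.

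First, fix the beamforming angle $\theta_m=\theta$ and the RD bin $(k',l')$ that contains the target. By (\ref{eq: RDSignal}), the signal amplitude in that bin is $N_\text{r}^\text{sen}\sqrt{N_\text{sub}N_\text{sym}}\beta_t/\alpha_m$. The noise variance is given by (\ref{eq: RDNoiseDistribution}). The factor $\alpha_m^2$ appears in both the signal power and the noise variance, so it cancels, and the RD-map SNR is
\begin{equation*}
\mathrm{SNR}_\theta=\frac{N_\text{r}^\text{sen}N_\text{sub}^2N_\text{sym}^2|\beta_t|^2}{\sigma_\text{n}^{\text{sen}^2}\sum_{k,l}|\mathbf{a}_\text{t}^H(f_k,\theta)\mathbf{F}_k\mathbf{s}_{k,l}|^{-2}}.
\end{equation*}
For CA-CFAR with a fixed false-alarm probability and Gaussian noise, as in \cite{hsu2021analysis}, the detection probability has the form $P_\text{D}=(1+\gamma/(1+\mathrm{SNR}))^{-N_\text{ref}}$ or a similar expression. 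Here $\gamma$ and $N_\text{ref}$ depend only on the CFAR design, not on $\mathbf{F}$. This expression is monotonically increasing in $\mathrm{SNR}$, so it suffices to show that the SNR, or the lower bound we can guarantee on it, increases with $P_\text{th}$.

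Second, I would bring in the data statistics. Under the i.i.d. standard Gaussian assumption, $g_{k,l}=\mathbf{a}_\text{t}^H(f_k,\theta)\mathbf{F}_k\mathbf{s}_{k,l}$ is complex Gaussian with zero mean and variance $\|\mathbf{F}_k^H\mathbf{a}_\text{t}(f_k,\theta)\|^2=B_k(\mathbf{F}_k,\theta)$. So $|g_{k,l}|^2=B_k\,e_{k,l}$, where the $e_{k,l}$ are i.i.d. unit exponential variables whose law does not depend on $\mathbf{F}$. The denominator then becomes $\sum_{k,l}B_k^{-1}e_{k,l}^{-1}$. Using $B_k\ge P_\text{th}$, this is at most $P_\text{th}^{-1}\sum_{k,l}e_{k,l}^{-1}$. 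Hence, pathwise,
\begin{equation*}
\mathrm{SNR}_\theta\ge \frac{P_\text{th}\,N_\text{r}^\text{sen}N_\text{sub}^2N_\text{sym}^2|\beta_t|^2}{\sigma_\text{n}^{\text{sen}^2}\sum_{k,l}e_{k,l}^{-1}},
\end{equation*}
and the random factor in this bound is independent of the precoder. Since $P_\text{D}$ is increasing in the SNR, the conditional detection probability is bounded below by a function that increases in $P_\text{th}$. Averaging over the $e_{k,l}$, which preserves monotonicity, gives a guaranteed detection probability that increases in $P_\text{th}$. This is the sense in which raising $P_\text{th}$ improves detection.

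The main obstacle is that $\mathbb{E}[e^{-1}]$ is infinite for an exponential variable. The expected noise variance therefore diverges, and an argument based on averaged SNR cannot be used. I would avoid this by keeping the argument pathwise, conditioning on the symbol realization and averaging only after applying the monotone map. As a fallback, I would invoke the approximation in \cite{xiao2024novel} that $\alpha_m$-normalized division behaves as if $|g_{k,l}|^2\approx B_k$, which gives a deterministic SNR proportional to the harmonic mean of the $B_k$, and hence bounded below by $P_\text{th}$.

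A secondary point is the angle-approximation error $\epsilon_\text{a}$ and the choice of the largest-power angle. I would handle this by assuming the target lies on the grid angle $\theta_m=\theta$, so those effects do not enter the argument.
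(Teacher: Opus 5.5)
Your argument is correct and is the analysis the paper only points to: its proof simply defers to the RD-map analysis of \cite{hsu2021analysis}, after noting that the RD-map noise power in \eqref{eq: RDNoiseDistribution} scales with $|\mathbf{a}_\text{t}^H(f_k,\theta)\mathbf{F}_k\mathbf{s}_{k,l}|^{-2}$. Your chain makes that explicit: the RD-map SNR with $\alpha_m$ cancelling, CA-CFAR detection monotone in that SNR under the usual idealization of i.i.d.\ Gaussian reference cells (which the paper also relies on), and then $B_k\ge P_\text{th}$. The coupling $|\mathbf{a}_\text{t}^H\mathbf{F}_k\mathbf{s}_{k,l}|^2=B_k e_{k,l}$, with a law independent of the precoder, is a clean way around the divergent $\mathbb{E}[e_{k,l}^{-1}]$; and since your bound is attained when $B_k=P_\text{th}$ for all $k$, the quantity you show increasing is exactly the worst-case detection probability at angle $\theta$ over precoders with $B_k(\mathbf{F}_k,\theta)\ge P_\text{th}$ for all $k$, which is the sharpest reading of the lemma.
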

\begin{proof}
    The analysis follows the RD map analysis in \cite{hsu2021analysis}.
\end{proof}

With Lemma \ref{thm: powerCon}, we consider maximizing EE while satisfying the required detection probability lower bound. Thus, the design problem is formulated as: 
\begin{subequations}
\allowdisplaybreaks
    \label{eq:EE_prob}
\begin{alignat}{2}
    & \max_{\mathbf{F}} &&\quad  EE\left(\mathbf{F}\right) \nonumber\\
    &\;\;\text{s.t.} && \quad B_k\left(\mathbf{F}_k, \theta\right)\geq P_\text{th}, \forall k=1\ldots N_\text{sub}, \forall\theta\in\Theta_\text{tar}, \label{eq: radarCon}\\
    &&& \quad \sum_{k=1}^{N_\text{sub}}\|\mathbf{F}_k\|_F^2 \leq P_\text{tx}, \label{eq: powerCon}
    \end{alignat}
\end{subequations}
where $\Theta_\text{tar}$ is the set containing all the angles of interest, (\ref{eq: radarCon}) helps guaranteeing the sensing performance as indicated by Lemma \ref{thm: powerCon}, and (\ref{eq: powerCon}) is the transmit power constraint. We observe that \eqref{eq:EE_prob} is hard to solve by standard solvers since it includes multiple non-convex terms. Therefore, we analyze the problem and propose an effective solution approach for it below in Sec. \ref{sec: 2}-B.

\subsection{Proposed Iterative Design Approach}

\label{sec_FD_prop}

To derive the design approach, we first need to deal with the non-differentiable unit step function, corresponding to the RF chain selection feature in (\ref{eq: totalPower}), i.e., which RF chains are selected to be turned on or off. Specifically, we observe that $\|\mathbf{F}\left(i, :\right)\|_2$ is non-negative. Thus, we propose to approximate the unit-step function by using the hyperbolic tangent function, i.e., we let $u\left(x\right) \approx \tanh\left(x\right), \forall x\geq0$. Consequently, the approximated total power consumption can be expressed as:
\begin{equation}
    \label{eq: relaxedPtotal}
    \hat{P}_\text{total}\left(\mathbf{F}, \lambda\right) = \sum_{k=1}^{N_\text{sub}}\|\mathbf{F}_k\|_F^2 / \eta_\text{PA}+P_\text{BB} + P_\text{RF}\hat{N}_\text{RF}\left(\mathbf{F},\lambda\right),
\end{equation}
where $\hat{N}_\text{RF}\left(\mathbf{F},\lambda\right) = \sum_{i=1}^{N_\text{RF}}\tanh\left(\lambda\|\mathbf{F}\left(i,:\right)\|_2\right)$, 
and $\lambda$ is the coefficient that controls the shape of the hyperbolic tangent function, in which if $\lambda\rightarrow\infty$, $\hat{P}_\text{total}\rightarrow P_\text{total}$. With (\ref{eq: relaxedPtotal}), the approximate design problem parameterized by $\lambda$ can obtained:
\begin{equation}
\label{pr: hyperbolicRelaxed}
    \begin{aligned}[b]
        & \max_{\mathbf{F}} &&  R\left(\mathbf{F}\right) / \hat{P}_\text{total}\left(\mathbf{F}, \lambda\right)\\
    &\;\;\text{s.t.} && B_k\left(\mathbf{F}_k, \theta\right)\geq P_\text{th}, \forall k,\theta,  \quad\sum_{k=1}^{N_\text{sub}}\|\mathbf{F}_k\|_F^2 \leq P_\text{tx}.
    \end{aligned}
\end{equation} 
Subsequently, since the problem in (\ref{pr: hyperbolicRelaxed}) is a fractional problem, it can be addressed by quadratic transform (QT) proposed in \cite{shen2018fractional} which equivalently reformulate the problem as:
\begin{equation}
\label{pr: FD_QT}
    \begin{aligned}[b]
        & \max_{\mathbf{F}, \mu} &&  2\mu R^\frac{1}{2}\left(\mathbf{F}\right) - \mu^2\hat{P}_\text{total}\left(\mathbf{F}, \lambda\right) \\
    & \;\;\text{s.t.} &&  B_k\left(\mathbf{F}_k, \theta\right)\geq P_\text{th}, \forall k,\theta,\quad \sum_{k=1}^{N_\text{sub}}\|\mathbf{F}_k\|_F^2 \leq P_\text{tx}, 
    \end{aligned}
\end{equation}
where $\mu$ is an auxiliary variable whose optimal solution when given a $\mathbf{F}$ is:
\begin{equation}
\label{eq: optMu}
    \mu^\star = R^\frac{1}{2}\left(\mathbf{F}\right) / \hat{P}_\text{total}\left(\mathbf{F}, \lambda\right),
\end{equation}
where $\mu^\star$ is obtained by solving the first-order optimality condition for the objective function in \eqref{pr: FD_QT}. Note that by substituting $\mu^\star$ into the objective function of (\ref{pr: FD_QT}), (\ref{pr: hyperbolicRelaxed}) indeed can be restored from (\ref{pr: FD_QT}). Then, since $R\left(\mathbf{F}\right)$ is still nonconvex, we address it by using the WMMSE method in \cite[Theorem 17]{SCA} which converts the weighted sum-rate maximization problem to a WMMSE problem. We then provide Lemma \ref{le: WMMSE} as:
\begin{lemma}
\label{le: WMMSE}
When given a set of precoding matrices $\mathbf{F}_{k,u},\forall u$ of subcarrier $k$, the SE $R_{k,u}$ of subcarrier $k$ of user $u$ is:
\begin{equation}
    \begin{aligned}
        R_{k,u}\left(\mathbf{F}_k\right) &= \log\left|\mathbf{I}_{N_\text{r}}+\mathbf{R}_{\rm{in},k,u}^{-1}\mathbf{H}_{k,u}\mathbf{F}_{k,u}\left(\mathbf{F}_{k,u}\right)^H\mathbf{H}_{k,u}^H\right| \\
        &=\max_{\mathbf{U}_{k,u}, \mathbf{W}_{k,u}}\log\left|\mathbf{W}_{k,u}\right| - \mathrm{tr}\left(\mathbf{W}_{k,u}\mathbf{E}_{k,u}\right)+N_\text{r},
    \end{aligned}
\end{equation}
where $\mathbf{W}_{k,u}\in\mathbb{C}^{N_\text{r}\times N_\text{r}}$ is an auxiliary matrix and
\begin{equation}
    \begin{aligned}
        \mathbf{E}_{k,u} = & \left(\mathbf{I}-\mathbf{U}^H_{k,u}\mathbf{H}_{k,u}\mathbf{F}_{k,u}\right)\left(\mathbf{I}-\mathbf{U}^H_{k,u}\mathbf{H}_{k,u}\mathbf{F}_{k,u}\right)^H + \\
        &\sum_{\substack{i=1\\i\neq u}}^{N_\text{UE}}\mathbf{U}_{k,u}^H\mathbf{H}_{k,u}\mathbf{F}_{k,i}\mathbf{F}_{k,i}^H\mathbf{H}_{k,u}^H\mathbf{U}_{k,u} + \sigma_\text{n}^2\mathbf{U}_{k,u}^H\mathbf{U}_{k,u}.
    \end{aligned}
\end{equation}
is the mean-square-error matrix.
\end{lemma}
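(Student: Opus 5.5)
The plan is to prove the identity by carrying out the inner maximization in closed form: first over $\mathbf{U}_{k,u}$ for fixed $\mathbf{W}_{k,u}$, then over $\mathbf{W}_{k,u}\succ\mathbf{0}$, following the standard WMMSE argument of \cite[Theorem 17]{SCA}. For brevity write $\mathbf{A}=\mathbf{H}_{k,u}\mathbf{F}_{k,u}$ and $\mathbf{J}=\mathbf{R}_{\rm{in},k,u}$. Expanding $\mathbf{E}_{k,u}$ and collecting terms gives
\begin{equation*}
\mathbf{E}_{k,u}=\mathbf{I}-\mathbf{U}^H\mathbf{A}-\mathbf{A}^H\mathbf{U}+\mathbf{U}^H\left(\mathbf{A}\mathbf{A}^H+\mathbf{J}\right)\mathbf{U},
\end{equation*}
because the interference terms and the noise term together equal $\mathbf{U}^H\mathbf{J}\mathbf{U}$. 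Here $\mathbf{C}=\mathbf{A}\mathbf{A}^H+\mathbf{J}$ is the received-signal covariance, and $\mathbf{C}\succ\mathbf{0}$ since $\sigma_\text{n}^2>0$.

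\textbf{Step 1 (optimize $\mathbf{U}$).} For fixed $\mathbf{W}\succ\mathbf{0}$, the function $\mathrm{tr}(\mathbf{W}\mathbf{E})$ is convex quadratic in $\mathbf{U}$. Complete the square:
\begin{equation*}
\mathbf{E}=(\mathbf{U}-\mathbf{C}^{-1}\mathbf{A})^H\mathbf{C}(\mathbf{U}-\mathbf{C}^{-1}\mathbf{A})+\mathbf{I}-\mathbf{A}^H\mathbf{C}^{-1}\mathbf{A}.
\end{equation*}
This shows $\mathbf{E}\succeq\mathbf{E}^{\rm mmse}:=\mathbf{I}-\mathbf{A}^H\mathbf{C}^{-1}\mathbf{A}$ in the Loewner order, for every $\mathbf{U}$. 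Hence $\mathrm{tr}(\mathbf{W}\mathbf{E})\ge\mathrm{tr}(\mathbf{W}\mathbf{E}^{\rm mmse})$ for every $\mathbf{W}\succ\mathbf{0}$, with equality at the MMSE receiver $\mathbf{U}^\star=\mathbf{C}^{-1}\mathbf{A}$, independently of $\mathbf{W}$.

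\textbf{Step 2 (optimize $\mathbf{W}$).} It remains to maximize $\log|\mathbf{W}|-\mathrm{tr}(\mathbf{W}\mathbf{E}^{\rm mmse})$, which is concave in $\mathbf{W}$. By the matrix inversion lemma, $\mathbf{E}^{\rm mmse}=(\mathbf{I}+\mathbf{A}^H\mathbf{J}^{-1}\mathbf{A})^{-1}\succ\mathbf{0}$. The first-order condition $\mathbf{W}^{-1}=\mathbf{E}^{\rm mmse}$ then gives $\mathbf{W}^\star=(\mathbf{E}^{\rm mmse})^{-1}$, and the optimal value is
\begin{equation*}
-\log|\mathbf{E}^{\rm mmse}|-\mathrm{tr}(\mathbf{I})=\log|\mathbf{I}+\mathbf{A}^H\mathbf{J}^{-1}\mathbf{A}|-\mathrm{tr}(\mathbf{I}).
\end{equation*}
Sylvester's determinant identity gives $\log|\mathbf{I}+\mathbf{A}^H\mathbf{J}^{-1}\mathbf{A}|=\log|\mathbf{I}_{N_\text{r}}+\mathbf{J}^{-1}\mathbf{A}\mathbf{A}^H|=R_{k,u}$. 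The constant $\mathrm{tr}(\mathbf{I})$ cancels the $+N_\text{r}$ in the statement, so the maximum equals $R_{k,u}(\mathbf{F}_k)$.

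\textbf{Main obstacle: dimensions.} The hard part is making the dimensions consistent. $\mathbf{E}_{k,u}$ as written is $N_s\times N_s$, with $\mathbf{U}_{k,u}\in\mathbb{C}^{N_\text{r}\times N_s}$. So the identity with $+N_\text{r}$ and $\mathbf{W}_{k,u}\in\mathbb{C}^{N_\text{r}\times N_\text{r}}$ only matches literally when $N_s=N_\text{r}$. I would treat this case, or read the constant as the dimension of $\mathbf{W}$, and remark that the argument is otherwise dimension-agnostic. I would also note that $\mathbf{W}$ ranges over positive definite Hermitian matrices, so that $\log|\mathbf{W}|$ is well defined and the stationary point is the global maximum by concavity.
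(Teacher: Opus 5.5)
Your proposal is correct and is the standard WMMSE argument that the paper simply delegates to \cite[Theorem 17]{SCA}: in your notation, the MMSE receiver $\mathbf{U}^\star=\mathbf{C}^{-1}\mathbf{A}$ (the same $\mathbf{U}_{k,u}^\star$ as in \eqref{eq: optW}), then $\mathbf{W}^\star=(\mathbf{E}^{\mathrm{mmse}})^{-1}$, then the Woodbury and Sylvester identities to recover the log-det rate. Your dimensional remark is also right: $\mathbf{E}_{k,u}$ is $N_s\times N_s$, so $\mathbf{W}_{k,u}$ should be an $N_s\times N_s$ positive definite matrix and the additive constant should be $N_s=\mathrm{tr}(\mathbf{I}_{N_s})$, which means the statement's $N_\text{r}$ (in both places) is literally correct only when $N_s=N_\text{r}$, as happens to hold in the paper's simulations.
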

\begin{proof}
    See \cite[Theorem 17]{SCA}.
\end{proof}

By Lemma \ref{le: WMMSE} and the fact that the optimal $\mu$ should be positive, the problem in (\ref{pr: FD_QT}) can be reformulated as:
\begin{equation}
\label{pr: WMMSE}
    \begin{aligned}[b]
        & \max_{\mathbf{F}, \mu, \mathbf{U}, \mathbf{W}} &&  2\mu \tilde{R}^\frac{1}{2}\left(\mathbf{F}, \mathbf{U}, \mathbf{W}\right) - \mu^2 \hat{P}_\text{total}\left(\mathbf{F}, \lambda\right) \\
    & \;\;\;\;\text{s.t.} &&  B_k\left(\mathbf{F}_k, \theta\right)\geq P_\text{th}, \forall k,\theta, \quad \sum_{k=1}^{N_\text{sub}}\|\mathbf{F}_k\|_F^2 \leq P_\text{tx}, 
    \end{aligned}
\end{equation}
where $\tilde{R}\left(\mathbf{F}, \mathbf{U}, \mathbf{W}\right) =  \frac{1}{N_\text{sub}}\sum_{u=1}^{N_\text{UE}}\sum_{k=1}^{N_\text{sub}}\log\left|\mathbf{W}_{k,u}\right| - \mathrm{tr}\left(\mathbf{W}_{k,u}\mathbf{E}_{k,u}\right)+N_\text{r}$;
$\mathbf{U}_{k,u}$ and $\mathbf{W}_{k,u}$ are auxiliary matrices.
Subsequently, to resolve the non-convexity in the hyperbolic tangent functions of $\hat{P}_\text{total}$ and $B_k\left(\mathbf{F}_k, \theta\right)\geq P_\text{th}$ in (\ref{pr: WMMSE}), we exploit the successive convex approximation (SCA) \cite{SCA} to obtain a convexified problem, and then solve the convexified problem iteratively. Specifically, by using the first-order approximations obtained from a feasible point $\mathbf{F}^r$, we can have $\tilde{P}_\text{total}\left(\mathbf{F}, \lambda;\mathbf{F}^r\right)= \sum_{k=1}^{N_\text{sub}}\|\mathbf{F}_k\|_\text{F}^2 / \eta_\text{PA}+P_\text{BB} + 
P_\text{RF}\tilde{N}_\text{RF}\left(\mathbf{F},\lambda;\mathbf{F}^r\right)$, 
where 
\begin{equation}
\label{eq: linearizedNRF}
\begin{aligned}    &\tilde{N}_\text{RF}\left(\mathbf{F},\lambda;\mathbf{F}^r\right) = \sum_{i=1}^{N_\text{RF}}\Biggl(\tanh\left(\lambda \|\mathbf{F}^r\left(i,:\right)\|_2\right)+\\
    &  \lambda\mathrm{sech}^2\left(\lambda \|\mathbf{F}^r\left(i,:\right)\|_2\right)\left(\|\mathbf{F}\left(i,:\right)\|_2 - \|\mathbf{F}^r\left(i,:\right)\|_2\right)\Biggr),
\end{aligned}
\end{equation}
and 
\begin{equation}
\label{eq: linearizedRadCon}
\begin{aligned}
    &\tilde{B}_k\left(\mathbf{F}_k, \theta;\mathbf{F}_k^r\right) = \|\left(\mathbf{F}_{k}^{r}\right)^H\mathbf{a}\left(f_k,\theta\right)\|^2 \\
    &+ 4\mathrm{Re}\biggl\{\mathrm{tr}\left(\left(\mathbf{F}_{k}^{r}\right)^H\mathbf{a}\left(f_k,\theta\right)\mathbf{a}^H\left(f_k,\theta\right)\left(\mathbf{F}_k - \mathbf{F}_k^r\right)\right)\biggl\},
\end{aligned}
\end{equation}
where $\mathrm{Re}(x)$ is the operation that takes the real part of $x$. The convexified problem is then given as:
\begin{equation}
\label{pr: linearized}
    \begin{aligned}[b]
        & \max_{\mathbf{F}, \mu, \mathbf{U}, \mathbf{W}} &&  2\mu\tilde{R}^\frac{1}{2}\left(\mathbf{F}, \mathbf{U}, \mathbf{W}\right) - \mu^2\tilde{P}_\text{total}\left(\mathbf{F}, \lambda;\mathbf{F}^r\right) \\
    & \;\;\;\;\text{s.t.} && \tilde{B}_k\left(\mathbf{F}_k, \theta;\mathbf{F}_k^r\right)\geq P_\text{th}, \forall k,\quad \sum_{k=1}^{N_\text{sub}}\|\mathbf{F}_k\|_F^2 \leq P_\text{tx}.
    \end{aligned}
\end{equation}
Finally, we note that the optimal auxiliary variables can be obtained by solving the optimality conditions of $ \mathbf{U}_{k,u}$ and $\mathbf{W}_{k,u}$, leading to:
\begin{equation}
    \begin{aligned}
            \mathbf{U}_{k,u}^\star &= \left(\sum_{i=1}^{N_\text{UE}}\mathbf{H}_{k,u}\mathbf{F}_{k,i}\mathbf{F}_{k,i}^H\mathbf{H}_{k,u}^H+\sigma_\text{n}^2\mathbf{I}\right)^{-1}\mathbf{H}_{k,u}\mathbf{F}_{k,u},\\
    \mathbf{W}_{k,u}^\star &= \mathbf{E}_{k,u}^{-1}
    \label{eq: optW}.
    \end{aligned}
\end{equation}
Therefore, with a given feasible solution $\mathbf{F}^r$ obtained at iteration $r$, we can set $\mu$, $\mathbf{U}$, and $\mathbf{W}$ to their optimal values, denoted as $\mu^r$, $\mathbf{U}^r$, and $\mathbf{W}^r$, respectively, by using $\mathbf{F}^r$ and expressions in \eqref{eq: optMu} and \eqref{eq: optW}. It follows that the convexified problem can be cast as:
\begin{equation}
\label{pr: subproblemFD}
    \begin{aligned}
        & \max_{\mathbf{F}} &&  2\mu^r \tilde{R}^\frac{1}{2}\left(\mathbf{F}, \mathbf{U}^r, \mathbf{W}^r\right) - (\mu^{r})^2 \tilde{P}_\text{total}\left(\mathbf{F}, \lambda;\mathbf{F}^r\right) \\
    & \;\;\text{s.t.} && \tilde{B}_k\left(\mathbf{F}_k, \theta;\mathbf{F}_k^r\right)\geq P_\text{th}, \forall k,\theta, \\
    &&& \sum_{k=1}^{N_\text{sub}}\|\mathbf{F}_k\|_F^2 \leq P_\text{tx}.
    \end{aligned}
\end{equation}
which is a convex problem that can be solved by standard solvers. Then, since \eqref{pr: subproblemFD} is solvable by using standard convex solvers, we can iteratively solve (\ref{pr: subproblemFD}) until convergence to obtain an effective solution for the approximated joint precoding and RF chain selection problem in (\ref{pr: hyperbolicRelaxed}).

Finally, recall that $\lambda$ is the coefficient that controls the shape of the hyperbolic tangent function and larger $\lambda$ gives a better approximation for the power consumption model. However, a large $\lambda$ indeed renders the slope of the hyperbolic tangent function be too steep, resulting in certain numerical issue. Therefore, we propose to sequentially update $\lambda$ in each iteration so that the value of $\lambda$ also increases from a small value to a large value that tightens the approximation gap in the end. Indeed, when $\lambda$ is small, our approach can more flexibly move  the precoding along the hyperbolic tangent function so that the objective function of the approximate problem can be better improved. On the other hand, as $\lambda$ increases, the state of each RF chain gradually becomes locked, making the iterative approach eventually converges to a 0-1-like solution. As a consequence, when given a threshold $\epsilon$, the final on-off decisions for RF chains can be easily obtained by using rounding. The overall approach is summarized in Alg. \ref{alg_FDInnerLoop}.

\begin{algorithm}[t]
\caption{Proposed Optimization Approach for $\mathbf{F}$}
\label{alg_FDInnerLoop}
\begin{algorithmic}[1]
    \State {Given a feasible $\mathbf{F}_0^{\text{outer}}$, $\nu>1$, $\lambda=1$, and $s=0$}
    \While {$s\leq R_{\text{outer}}$ or update is within $\epsilon$}
    \State {$\mathbf{F}_0^{\text{inner}}=\mathbf{F}_s^{\text{outer}}$. $r=0$.}
    \While {$r\leq R_{\text{inner}}$}
        \State {Update $\mu$ by using (\ref{eq: optMu}).}
        \State {Update $\mathbf{U}$ and $\mathbf{W}$ by using (\ref{eq: optW}).}
        \State {Solve (\ref{pr: subproblemFD}) to obtain $\mathbf{F}_{r+1}^{\text{inner}}$. $r=r+1$. }
    \EndWhile
    \State{$\mathbf{F}_{s+1}^{\text{outer}}=\mathbf{F}_{R_{\text{inner}}}^{\text{inner}}$. $s=s+1$ and $\lambda=\nu\lambda$.}
    \EndWhile
    \State{$\mathbf{F}^*$ is obtained by using rounding with $\mathbf{F}_{R_{\text{outer}}}$.}
\end{algorithmic}
\end{algorithm}
 
\subsection{Proposed Alternative Design Approaches}

\label{sec: FD_refScheme}

The proposed approach in Sec. \ref{sec_FD_prop} relies on relaxing the integer variables associated with the RF chain selection via using a hyperbolic tangent function, thereby enhancing the tractability of the problem. However, alternative methods can be devised that avoid such relaxation, albeit at the expense of increased computational complexity. In line with this perspective, this section explores both brute-force-based and greedy-based design approaches as alternative design approaches, in which both approaches are still based on the QT and WMMSE transformations discussed in Sec. \ref{sec_FD_prop}, while the main difference lies on how to deal with the RF chain selection. To this end,  \eqref{eq:EE_prob} can be reformulated as:
\begin{equation}
\label{pr: benchmarkReform}
    \begin{aligned}
    & \max_{\mathbf{F}, \mathbf{A}} &&  \frac{\displaystyle{\sum_{u=1}^{N_\text{UE}}\sum_{k=1}^{N_\text{sub}}}\log\left|\mathbf{I}_{N_\text{r}}+\mathbf{R}_{\text{in},k,u}^{-1}\mathbf{H}_{k,u}\mathbf{A}\mathbf{F}_{k,u}\mathbf{F}_{k,u}^H\mathbf{A}\mathbf{H}_{k,u}^H\right|}{N_\text{sub}\left(\sum_{k=1}^{N_\text{sym}}\|\mathbf{A}\mathbf{F}_k\|_F^2 / \eta_\text{PA}+P_\text{BB} + P_\text{RF}\mathrm{tr}\left(\mathbf{A}\right)\right)}\\
    &\;\;\text{s.t.} && \|\mathbf{F}_{k}^H\mathbf{A}\mathbf{a}\left(f_k,\theta\right)\|_2^2\geq P_\text{th}, \forall k=1\ldots N_\text{sub}, \forall\theta\in\Theta_\text{tar},\\
    &&& \sum_{k=1}^{N_\text{sub}}\|\mathbf{A}\mathbf{F}_k\|_F^2 \leq P_\text{tx},\\
    &&& \mathbf{A}\left(i,i\right) \in \{0,1\}, \forall i, \mathbf{A}\left(i,j\right)=0, \forall i,j, i\neq j,\\
\end{aligned}
\end{equation}
where $\mathbf{R}_{\text{in},k,u} = \sum_{i=1, i\neq u}^{N_\text{UE}}\mathbf{H}_{k,u}\mathbf{A}\mathbf{F}_{k,i}\mathbf{F}_{k,i}^H\mathbf{A}\mathbf{H}_{k,u}^H+\sigma_\text{n}^2\mathbf{I}_{N_\text{r}}$ and $\mathbf{A}$ is a diagonal binary selection matrix whose diagonal elements are indicators of the RF chain selection.

To develop the brute-force-based approach, the idea is to first develop the precoding design when given a selection matrix $\mathbf{A}$, and then examines all possible RF selections to find the one providing the best EE. To this end, we see that when $\mathbf{A}$ is given, the circuit power consumption is then fixed. As a consequence, by following the same prcedure in \eqref{pr: hyperbolicRelaxed}-\eqref{pr: subproblemFD}, we can obtain the convexified precoder design problem having the similar form of \eqref{pr: subproblemFD}, with the difference lies on that the effective communication channels $\mathbf{H}_{k,u},\forall k,u$ and radar steering vectors $\mathbf{a}(f_k,\theta),\forall k,\theta$ need to be replaced by the effective communication channels $\mathbf{A}\mathbf{H}_{k,u},\forall k,u$ and effective radar steering vectors $\mathbf{A}\mathbf{a}(f_k,\theta),\forall k,\theta$, respectively. In addition, the approximate power consumption expression $\tilde{P}_\text{total}$ should be replaced by exact power consumption, given as $\sum_{k=1}^{N_\text{sym}}\|\mathbf{A}\mathbf{F}_k\|_F^2 / \eta_\text{PA}+P_\text{BB} + P_\text{RF}\mathrm{tr}\left(\mathbf{A}\right)$. With SCA method, when given a selection matrix $\mathbf{A}$, the convexified problem is iteratively solved until convergence for finding the precoders corresponding to the selection $\mathbf{A}$. Finally, the brute-force-based approach examines all possible RF selections along with their corresponding precoders obtained by aforementioned procedure to attain the one providing the best EE. This approach then with very high complexity as the number of possible selections to check is $2^{N_{\text{t}}}-1$.

Since the brute-force-based approach has extremely high complexity, we discuss the greedy-based approach that has lower complexity as follows. At the beginning, the greedy-based approach considers that all RF chains are active, and then use the aforementioned precoding design approrach to obtain the corresponding precoding and EE, recorded as $EE^\star$. Subsequently, the approach starts to greedily turn off the RF chain. Specifically, in each step, the approach generates a list containing all the active RF chains in a randomized order. Following the order, the approach turns off the RF chain and obtain the corresponding precoding and EE value $EE^{\text{cur}}$. Then, if $EE^{\text{cur}} > EE^\star$, the approach turns this RF chain off, and then proceed to the next step, where a random list is generated again; otherwise, the approach does not turn the RF chain off, and then proceed to test the next RF chain in the list. Such procedure continues until all the active RF chains in the list are tested and none of them are turned off or there is only a single activated RF chain left. 

\subsection{Extension to Spectral Efficiency and Power Consumption Tradeoff Design}

Our proposed design approaches can be easily extended to obtain SE and power consumption tradeoff designs. To this end, we first formulate the SE and power consumption tradeoff design problem as:
\begin{equation}
\label{pr:SEPowTrade}
    \begin{aligned}[b]
        & \max_{\mathbf{F}, \mu} &&  \omega_1 R\left(\mathbf{F}\right) - \omega_2P_\text{total}\left(\mathbf{F}, \lambda\right) \\
    & \;\;\text{s.t.} &&  B_k\left(\mathbf{F}_k, \theta\right)\geq P_\text{th}, \forall k,\theta,\quad \sum_{k=1}^{N_\text{sub}}\|\mathbf{F}_k\|_F^2 \leq P_\text{tx}, 
    \end{aligned}
\end{equation} 
where $\omega_1$ and $\omega_2$ are tradeoff factors. We then observe that \eqref{pr:SEPowTrade} has a expression similar to that in \eqref{pr: FD_QT}, where the major difference lies in that $R\left(\mathbf{F}\right)$ does not need to take the square-root. However, $R\left(\mathbf{F}\right)$ without taking the square-root indeed is more tractable. Thus, we can directly follow the procedure in Sec. \ref{sec_FD_prop}, where we first approximate the unit-step function using the hyperbolic tangent function, and then apply Lemma \ref{le: WMMSE} together with SCA in \eqref{eq: linearizedNRF} and \eqref{eq: linearizedRadCon} to derive an iterative algorithm that has a similar structure to Alg. \ref{alg_FDInnerLoop}. This leads to the tradeoff design approach, where different tradeoff points can be obtained by having different values of $\omega_1$ and $\omega_2$.

\section{Energy Efficient Design under Fully-Connected Architecture}

\label{sec: FC}

In this section, we propose the hybrid precoding and RF chain selection design approach under the FC architecture. 

\subsection{System Model and Problem Formulation}

When considering the FC structure, the signal models in \ref{sec: 2} should should be modified by replacing $\mathbf{F}_{k,u}$ with $\mathbf{F}_\text{RF}\mathbf{F}_{\text{BB},k}^u$, where $\mathbf{F}_\text{RF}$ is the analog precoder whose dimensionality is $N_\text{t}\times N_\text{RF}$, with $|\mathbf{F}_\text{RF}\left(i,j\right)| = 1$, and $\mathbf{F}_{\text{BB},k}^u$ is the baseband digital precoder for subcarrier $k$ of user $u$. Then, with the diagonal binary selection matrix $\mathbf{A}$, SE of the system is expressed as:
\begin{equation}
\begin{aligned}
    \label{eq: SEFC}
    &R_{\text{FC}}\left(\mathbf{F}_\text{RF}, \mathbf{F}_{\text{BB}}, \mathbf{A}\right) = \frac{1}{N_\text{sub}}\sum_{u=1}^{N_\text{UE}}\sum_{k=1}^{N_\text{sub}}\\
    &\log\left|\mathbf{I}_{N_\text{r}}+\mathbf{R}_{\text{in},k,u}^{-1}\mathbf{H}_{k,u}\mathbf{F}_\text{RF}\mathbf{A}\mathbf{F}_{\text{BB},k}^{u}\left(\mathbf{F}_{\text{BB},k}^{u}\right)^H\mathbf{A}\mathbf{F}_\text{RF}^H\mathbf{H}_{k,u}^H\right|,
\end{aligned}
\end{equation}
where
\begin{equation}
    \mathbf{R}_{\text{in},k,u} = \sum_{i\neq u}^{N_\text{UE}}\mathbf{H}_{k,u}\mathbf{F}_\text{RF}\mathbf{A}\mathbf{F}_{\text{BB},k}^{i}\left(\mathbf{F}_{\text{BB},k}^{i}\right)^H\mathbf{A}\mathbf{F}_\text{RF}^H\mathbf{H}_{k,u}^H+\sigma_\text{n}^2\mathbf{I}_{N_\text{r}}
\end{equation}
is the interference-plus-noise covariance matrix for subcarrier $k$ of user $u$. When considering the hybrid precoding, the power consumption of phase shifters in the analog precoder should be included in the power consumption model. Consequently, since each RF chain in the FC structure is connected to $N_\text{t}$ phase shifters, the power consumption model is:
\begin{equation}
\begin{aligned}
        &P_\text{total}^\text{FC}\left(\mathbf{F}_\text{RF}, \mathbf{F}_{\text{BB}}, \mathbf{A}\right) =\\ &\sum_{k=1}^{N_\text{sub}}\|\mathbf{F}_\text{RF}\mathbf{A}\mathbf{F}_{\text{BB},k}\|_F^2 / \eta_\text{PA}+P_\text{BB} 
        + \left(P_\text{RF} + N_\text{t}P_\text{PS}\right)\mathrm{tr}\left(\mathbf{A}\right),
\end{aligned}
\end{equation}
where $P_\text{PS}$ is the power consumption of a phase shifter and $\mathbf{F}_{\text{BB},k}=[\mathbf{F}_{\text{BB}, k}^1,\ldots,\mathbf{F}_{\text{BB},k}^{N_\text{UE}}]$. With the above results, EE of the system considering the FC architecture is:
\begin{equation}
\label{eq: FCEE}
\begin{aligned}
    EE_\text{FC}\left(\mathbf{F}_\text{RF}, \mathbf{F}_{\text{BB}},\mathbf{A}\right)=\frac{R_{\text{FC}}\left(\mathbf{F}_\text{RF}, \mathbf{F}_{\text{BB}}, \mathbf{A}\right)}{P_\text{total}^\text{FC}\left(\mathbf{F}_\text{RF}, \mathbf{F}_{\text{BB}}, \mathbf{A}\right)}.
\end{aligned}
\end{equation}

To formulate the EE maximization problem for hybrid precoding with FC architecture, we derivethe beam power toward angle $\theta$ at subcarrier $k$ is derived as: 
\begin{equation}
\label{eq: BPFC}
    B_k\left(\mathbf{F}_\text{RF}, \mathbf{F}_{\text{BB}, k}, \mathbf{A}, \theta\right) =\|\mathbf{F}_\text{RF}^H\mathbf{A} \mathbf{F}_{\text{BB}, k}^H\mathbf{a}\left(f_k,\theta\right)\|_2^2.
\end{equation}
Therefore, the energy efficient hybrid precoding and RF chain selection design problem under FC architecture is:
\begin{equation}
\label{pr: FC}
    \begin{aligned}
    & \max_{\mathbf{F}_\text{RF}, \mathbf{F}_{\text{BB}}, \mathbf{A}} &&  EE_\text{FC}\left(\mathbf{F}_\text{RF}, \mathbf{F}_{\text{BB}},\mathbf{A}\right)\\
    &\quad\text{ s.t.} && B_k\left(\mathbf{F}_\text{RF}, \mathbf{F}_{\text{BB}, k}, \mathbf{A}, \theta\right)\geq P_\text{th}, \forall k,\theta,\\
    &&& \sum_{k=1}^{N_\text{sub}}\|\mathbf{F}_\text{RF}\mathbf{A}\mathbf{F}_{\text{BB},k}\|_F^2 \leq P_\text{tx},\\
    &&& |\mathbf{F}_\text{RF}\left(i,j\right)| = 1, \forall i,j,\\
    &&& \mathbf{A}\left(i,i\right) \in \{0,1\}, \forall i, \mathbf{A}\left(i,j\right)=0, \forall i\neq j.
\end{aligned}
\end{equation}

\subsection{Proposed Design Approach}

When considering the FC architecture, due to symmetry, it can be observed that which RF chains are activated does not change the solution of the (\ref{pr: FC}), as long as the the number of active RF chains is identical. Therefore, the number of candidates for RF chain selection can be reduced from $2^{N_\text{RF}-1}$ to only $N_\text{RF}$, where each candidate represents a distinct number of active RF chains. We thus define the candidate set of RF chain selection as $\mathcal{A} = \big\{\mathbf{A}_1,\mathbf{A}_2,\ldots,\mathbf{A}_{N_\text{RF}}\big\}$, 
where $\mathbf{A}_n$ is a diagonal selection matrix whose first $n$ diagonal elements are equal to $1$, while all remaining diagonal elements are $0$. With this, our approach is to search through all possible candidates in $\mathcal{A}$, and then selects the best one as the solution. The remaining is to obtain the energy efficient hybrid precoding for each candidate in $\mathcal{A}$.

To this end, when considering the candidate $\mathbf{A}_n$, (\ref{pr: FC}) can be reformulated as:
\begin{equation}
\label{pr: FC_fixA}
    \begin{aligned}
        & \max_{\mathbf{F}_\text{RF}, \mathbf{F}_\text{BB}} &&  EE_\text{FC}\left(\mathbf{F}_\text{RF}, \mathbf{F}_{\text{BB}},\mathbf{A}_n\right)\\
        &\;\;\text{s.t.} && \|\mathbf{F}_{\text{BB},k}^H\mathbf{A}_n\mathbf{F}_\text{RF}^H\mathbf{a}\left(f_k,\theta\right)\|_2^2\geq P_\text{th}, \forall k,\theta,\\
        &&& \sum_{k=1}^{N_\text{sub}}\|\mathbf{F}_\text{RF}\mathbf{A}_n\mathbf{F}_{\text{BB},k}\|_F^2 \leq P_\text{tx}, |\mathbf{F}_\text{RF}\left(i,j\right)| = 1, \forall i,j.
    \end{aligned}
\end{equation}
To solve \eqref{pr: FC_fixA}, we first notice that when considering the hybrid precoding design with FC architecture, it has been shown in \cite{yu2016alternating} and many other situations that if the corresponding FD precoding can be obtained, an effective hybrid precoding can be attained by solve a matching problem that aims to match the hybrid precoding to the FD precoding. Following this principle, we propose to consider solving: 
\begin{equation}
\label{pr: FCmatching}
    \begin{aligned}
        &\min_{\mathbf{F}'_\text{RF}, \mathbf{F}'_\text{BB}} && \|\mathbf{F}_n^\text{opt} - \mathbf{F}'_\text{RF}\mathbf{F}'_\text{BB}\|_F \\
        & \quad\text{s.t.} && \|\mathbf{F}'_\text{RF}\mathbf{F}'_{\text{BB}}\|_F^2 \leq P_\text{tx}, |\mathbf{F}'_\text{RF}\left(i,j\right)| = 1, \forall i,j,
    \end{aligned}
\end{equation}
where $\mathbf{F}'_\text{RF} = \mathbf{F}_\text{RF}\left(:,1 : n\right)$ is the matrix containing the first $n$ columns of the original $\mathbf{F}_\text{RF}$; $\mathbf{F}'_\text{BB} = \mathbf{F}_\text{BB}\left(1: n, :\right)$ is the matrix containing the first $n$ columns of the original $\mathbf{F}_\text{BB}$; and $\mathbf{F}_n^\text{opt}$ is a FD precoder obtained by solving \eqref{pr: WMMSE} using one of our proposed approaches in Sec. \ref{sec_3}, under which the activated RF chains are already determined by $\mathbf{A}_n$, i.e., the first $n$ RF chains are activated. Note that for those RF chains and corresponding phase shifters are not activated, their values are simply set to $0$. In addition, while directly solving \eqref{pr: WMMSE} can yield an effective $\mathbf{F}_n^\text{opt}$, we shall modify the power consumption model in \eqref{pr: WMMSE} as $\sum_{k=1}^{N_\text{sub}}\|\mathbf{F}_k\|_F^2 / \eta_\text{PA}+P_\text{BB} + n\left(P_\text{RF} + N_\text{t}P_\text{PS}\right)$ to better match that of the FC architecture. Solving the modified problem to obtain $\mathbf{F}_n^\text{opt}$ indeed can lead to a better solution for \eqref{pr: FC_fixA}, as the adjusted power model more accurately reflects the system's actual power consumption.

To solve (\ref{pr: FCmatching}), we follow the similar approach provided in \cite{yu2016alternating}. Specifically, we first ignore the power constraint that couples $\mathbf{F}_\text{RF}'$ and $\mathbf{F}_\text{BB}'$ for a moment, and thus $\mathbf{F}_\text{RF}'$ and $\mathbf{F}_\text{BB}'$ becomes separable in constraints. As a consequence, we can use block coordinate descent (BCD) method \cite{BCD} to iteratively update $\mathbf{F}_\text{RF}'$ and $\mathbf{F}_\text{BB}'$ until convergence. To this end, we see that when fixing $\mathbf{F}'_\text{RF}$ to solve $\mathbf{F}'_\text{BB}$, the optimal $\mathbf{F}'_\text{BB}$ is the least-square solution given as $\mathbf{F}_\text{BB}^{'} = \left(\mathbf{F}_\text{RF}^{'H}\mathbf{F}'_\text{RF}\right)^{-1} \mathbf{F}_\text{RF}^{'H}\mathbf{F}_\text{opt}$.
On the other hand, when fixing $\mathbf{F}'_\text{BB}$ and elements in $\mathbf{F}'_\text{RF}$ except for column $q$, we can obtain the subproblem as:
\begin{equation}
    \label{pr: FCmatching_RFsubproblem}
    \begin{aligned}
        \min_{\mathbf{f}_{\text{RF},q}} \left\|\mathbf{F}_n^\text{opt} - \sum_{p=1}^{n}\mathbf{f}_{\text{RF},p}  \mathbf{f}^T_{\text{BB},p}\right\|_F \quad \text{s.t.} \quad |\mathbf{f}_{\text{RF},q}\left(i\right)| = 1, \forall i,
    \end{aligned}
\end{equation}
where $\mathbf{f}_{\text{RF},p} = \mathbf{F}_\text{RF}'\left(:, p\right)$ is column $p$ of $\mathbf{F}_\text{RF}'$ and $\mathbf{f}_{\text{BB},p}^T = \mathbf{F}_\text{BB}'\left(p, :\right)$ is column $p$ of $\mathbf{F}_\text{BB}'$. It follows that (\ref{pr: FCmatching_RFsubproblem}) can be further reformulated as: 
\begin{equation}
    \begin{aligned}
        \max_{\mathbf{f}_{\text{RF},j}} \mathrm{Re}\Bigg\{\mathbf{f}_{\text{fix},q}\mathbf{f}^H_{\text{RF},q}\Bigg\}\quad \text{s.t.} \quad|\mathbf{f}_{\text{RF},q}\left(i\right)| = 1, \forall i,
    \end{aligned}
\end{equation}
where $\mathbf{f}_{\text{fix},q} = \left(\mathbf{F}_\text{opt}-\sum_{p=1,p\neq q}^{n}\mathbf{f}_{\text{RF},p}  \mathbf{f}^T_{\text{BB},p}\right)\mathbf{f}^*_{\text{BB},q}$. Hence, the  optimal $\mathbf{f}_{\text{RF},q}$ can be obtained as:
\begin{equation}
\label{eq: FCoptRF}
    \mathbf{f}_{\text{RF},q} \left(i\right)= e^{j\angle\mathbf{f}_{\text{fix},q}\left(i\right)},\forall i,
\end{equation}
where $\angle\mathbf{f}_{\text{fix},q}\left(i\right)$ is the phase of $\mathbf{f}_{\text{fix},q}\left(i\right)$. With \eqref{eq: FCoptRF}, the column $q$ of the analog precoding matrix can be updated. Subsequently, by iteratively updating different columns of the analog precoding matrix using \eqref{eq: FCoptRF}, we can update the whole analog precoding when given the digital precoding $\mathbf{F}'_\text{BB}$. Finally, by iteratively updating $\mathbf{F}_\text{BB}'$ and $\mathbf{F}_\text{RF}'$ until convergence, we can obtain the effective design of the hybrid precoder. Since our precoder design problem is to maximize energy efficiency, the power constraint is usually satisfied. Since the above procedure ignore the power constraint, we need to conduct the power normalization if $\|\mathbf{F}_\text{RF}'\mathbf{F}_{\text{BB}}'\|_F^2 > P_\text{tx}$. We note that it has been shown in \cite[Lemma 1]{yu2016alternating} that such normalization does not increase the matching error by more than a factor of two.


Finally, by solving \eqref{pr: FC_fixA} with different $\mathbf{A}_n$ using the aforementioned approach, we can obtain solutions under different RF chain selections. We then compare the EE of them to obtain the most energy efficient hybrid precoding and RF chain selection design for the FC architecture. We note that similar to the case of fully-digital precoding design, our approach here can be directly extended to obtain the SE-power consumption tradeoff design approach for hybrid precoding under fully-connected architecture.  


\section{Energy Efficient Design under Partially-Connected Architecture}

\label{sec: PC}

In this section, we propose the hybrid precoding and RF chain selection design approach under the PC architecture.

\subsection{System Model and Problem Formulation}

When considering the hybrid precoding and RF chain selection with PC architecture, the system model in Sec. \ref{sec: FC} can be mostly reused, and the only difference is that $\mathbf{F}_\text{RF}$ under the PC architecture should be a block diagonal matrix, in which only elements $(j-1)\frac{\mathbf{N}_\text{t}}{\mathbf{N}_\text{RF}}+1,...,(j-1)\frac{\mathbf{N}_\text{t}}{\mathbf{N}_\text{RF}}+\frac{\mathbf{N}_\text{t}}{\mathbf{N}_\text{RF}}-1$ in column $j$ of $\mathbf{F}_\text{RF}$ are non-zero. As a result, we let $\bar{\mathbf{F}} = \mathbf{F}_\text{RF}\mathbf{F}_\text{BB}$. It follows that when RF chain $i$ is turned off, elements in row $\left(i-1\right)\frac{N_\text{t}}{N_\text{RF}}+1$ to row $i\frac{N_\text{t}}{N_\text{RF}}$ would be $0$, leading to that $u\left(\bigg\|\bar{\mathbf{F}}\left(\left(i-1\right)\frac{N_\text{t}}{N_\text{RF}}+1:i\frac{N_\text{t}}{N_\text{RF}},:\right)\bigg\|_F\right)=0$; otherwise, $u\left(\bigg\|\bar{\mathbf{F}}\left(\left(i-1\right)\frac{N_\text{t}}{N_\text{RF}}+1:i\frac{N_\text{t}}{N_\text{RF}},:\right)\bigg\|_F\right)=1$.
Hence, the number of active RF chains can be computed as:
\begin{equation}
\begin{aligned}
        &\bar{N}_\text{RF}^\text{PC} = \sum_{i=1}^{N_\text{RF}}u\left(\bigg\|\bar{\mathbf{F}}\left(\left(i-1\right)\frac{N_\text{t}}{N_\text{RF}}+1:i\frac{N_\text{t}}{N_\text{RF}},:\right)\bigg\|_F\right),
\end{aligned}
\end{equation}
leading to the total power consumption given as: 
\begin{equation}
    \begin{aligned}
    P_\text{total}^\text{PC}\left(\mathbf{F}_\text{RF}, \mathbf{F}_{\text{BB}}\right) &= \frac{N_\text{t}}{N_\text{RF}} \sum_{k=1}^{N_\text{sub}}\|\mathbf{F}_{\text{BB},k}\|_F^2 / \eta_\text{PA} \\
    &+P_\text{BB}+ \left(P_\text{RF} + \frac{N_\text{t}}{N_\text{RF}}P_\text{PS}\right) \bar{N}_\text{RF}.
    \end{aligned}
\end{equation}
Moreover, the SE for the hybrid precoding under PC architecture $R_\text{PC}\left(\mathbf{F}_\text{RF}, \mathbf{F}_{\text{BB}}\right)$ follows the similar expression of (\ref{eq: SEFC}), with the modification that the binary selection matrix $\mathbf{A}$ is removed. 
Similarly, the expression for the the beam power toward angle $\theta$ at subcarrier $k$ modified from \eqref{eq: BPFC} is given as:
\begin{equation}
    B_k\left(\mathbf{F}_\text{RF}, \mathbf{F}_{\text{BB}, k}, \theta\right) =\|\mathbf{F}_\text{RF}^H\mathbf{F}_{\text{BB}, k}^H\mathbf{a}\left(f_k,\theta\right)\|_2^2.
\end{equation}
With the above, the hybrid precoding design and RF chain selection problem with PC architecture is formulated as:
\begin{equation}
\label{pr: PCproblem}
    \begin{aligned}
    & \max_{\mathbf{F}_\text{RF}, \mathbf{F}_{\text{BB}}} &&  EE_\text{
    PC}\left(\mathbf{F}_\text{RF}, \mathbf{F}_{\text{BB}}\right)=\frac{R_{\text{PC}}\left(\mathbf{F}_\text{RF}, \mathbf{F}_{\text{BB}}\right)}{P_\text{total}^\text{PC}\left(\mathbf{F}_\text{RF}, \mathbf{F}_{\text{BB}}\right)} \\
    &\;\;\;\,\text{s.t.} && B_k\left(\mathbf{F}_\text{RF}, \mathbf{F}_{\text{BB},k}, \theta\right)\geq P_\text{th}, \forall k,\theta, \\
    &&& \frac{N_\text{t}}{N_\text{RF}} \sum_{k=1}^{N_\text{sub}}\|\mathbf{F}_{\text{BB},k}\|_F^2 \leq P_\text{tx}, \mathbf{F}_\text{RF}\in\mathcal{F},
    \end{aligned}
\end{equation}
where $\mathcal{F}$ is the set containing all the possible analog precoders satisfying the PC architecure.

\subsection{Proposed Design Approach}

To obtain the design under PC architecture, we again propose to first obtain a FD precoding and RF chain selection by solving the following problem:
\begin{equation}
\label{pr: PCeqvalentFD}
    \begin{aligned}
        & \max_{\mathbf{F}} &&  \frac{R\left(\mathbf{F}\right)}{\sum_{k=1}^{N_\text{sym}}\|\mathbf{F}_k\|_F^2 / \eta_\text{PA}+P_\text{BB} + \left(P_\text{RF} + \frac{N_\text{t}}{N_\text{RF}}P_\text{PS}\right)\bar{N}_\text{RF}\left(\mathbf{F}\right)}\\
        &\;\;\text{s.t.} && \|\mathbf{F}_k^H\mathbf{a}\left(f_k,\theta\right)\|_2^2\geq P_\text{th}, \forall k,\theta, \sum_{k=1}^{N_\text{sub}}\|\mathbf{F}_k\|_F^2 \leq P_\text{tx},\\
    \end{aligned}
\end{equation}
where 
\begin{equation}
    \bar{N}_\text{RF}\left(\mathbf{F}\right) = \sum_{i=1}^{N_\text{RF}}u\left(\bigg\|\mathbf{F}\left(\left(i-1\right)\frac{N_\text{t}}{N_\text{RF}}+1:i\frac{N_\text{t}}{N_\text{RF}},:\right)\bigg\|_F\right).
\end{equation}
It can be observed that \eqref{pr: PCeqvalentFD} is simply the modified problem of \eqref{eq:EE_prob}, with the modification to let the power consumption model match the PC architecture. Then, since \eqref{pr: PCeqvalentFD} has a similar formulation as that in \eqref{eq:EE_prob}, it can be solved by using the similar procedure presented in Sec. \ref{sec_FD_prop}, where the main difference lies in that the unit-step function in $\bar{N}_\text{RF}\left(\mathbf{F}\right)$ is now approximated to match the power consumption model of the PC architecture, given as:
\begin{equation}
    \bar{N}_\text{RF}\left(\mathbf{F}\right) \approx \hat{N}_\text{RF}\left(\mathbf{F}\right) = \sum_{i=1}^{N_\text{RF}}\tanh\left(\lambda\big\|\mathbf{F}^{(i)}\big\|_F\right),
\end{equation}
where $\mathbf{F}^{(i)} = \mathbf{F}\left(\left(i-1\right)\frac{N_\text{t}}{N_\text{RF}}+1:i\frac{N_\text{t}}{N_\text{RF}},:\right)$.

With the the FD precoding matrix $\mathbf{F}_\text{PC}^\text{opt}$ obtained by using the modified Alg. \ref{alg_FDInnerLoop} as described above, we then design the analog and digital precoders. We note that by having $\mathbf{F}_\text{PC}^\text{opt}$, which RF chains are turned off have already been determined. Thus, the remaining is to derive the corresponding analog and digital precoders by solving the following problem:
\begin{equation}
\label{pr: PCmatching}
    \begin{aligned}
        &\min_{\mathbf{F}_\text{RF}, \mathbf{F}_\text{BB}} && \|\mathbf{F}_\text{PC}^\text{opt} - \mathbf{F}_\text{RF}\mathbf{F}_\text{BB}\|_F \\
        & \quad\text{s.t.} && \frac{N_\text{t}}{N_\text{RF}} \sum_{k=1}^{N_\text{sub}}\|\mathbf{F}_{\text{BB},k}\|_F^2 \leq P_\text{tx}, \mathbf{F}_\text{RF}\in\mathcal{F}',
    \end{aligned}
\end{equation}
where $\mathcal{F}'$ contains the feasible analog precoding matrices that satisfy the RF chain selection suggested by $\mathbf{F}_\text{PC}^\text{opt}$. As a result, if $\mathbf{F}_\text{PC}^\text{opt}$ suggests to turn off RF chain $i$, $\mathcal{F}'$ would only contains analog precoding matrices that do not use RF chain $i$.

To solve \eqref{pr: PCmatching}, we again temporarily ignore the power constraint to decouple $\mathbf{F}_\text{RF}$ and $\mathbf{F}_\text{BB}$ in the constraint. Then, due that each RF chain in the PC architecture is connected to an independent set of phase shfiters, when ignoring the power constraint, we can decompose the problem into the individual subproblems, where each subproblem independently design the parts of $\mathbf{F}_\text{RF}$ and $\mathbf{F}_\text{BB}$ corresponding to the specific RF chain. As a result, when considering RF chain $i$, we can obtain the following subproblem $i$, expressed as:
\begin{equation}
\label{pr: PCmatchingSub}
    \begin{aligned}
        &\min_{\mathbf{f}_{\text{RF},i}, \mathbf{F}_\text{BB}\left(i, :\right)} && \left\|\mathbf{F}_\text{PC}^{\text{opt},i} - \mathbf{f}_{\text{RF},i}\mathbf{F}_\text{BB}\left(i, :\right)\right\|_F \\
        & \quad\;\:\,\text{s.t.} && |\mathbf{f}_{\text{RF},i}\left(j\right)| = 1,  j =1,\ldots, \frac{N_\text{t}}{N_\text{RF}},
    \end{aligned}
\end{equation}
where $\mathbf{F}_\text{PC}^{\text{opt},i} = \mathbf{F}_\text{PC}^{\text{opt}}\left(i\frac{N_\text{t}}{N_\text{RF}}+1 : (i+1)\frac{N_\text{t}}{N_\text{RF}},:\right)$, and $\mathbf{f}_{\text{RF},i}\in\mathbb{C}^{\frac{N_\text{t}}{N_\text{RF}}}$ is the phase shifters of subarray $i$ in the analog precoder which is the $i$th block diagonal matrix of $\mathbf{F}_\text{RF}$, namely, $\mathbf{F}_\text{RF}=\text{blkdiag}(\mathbf{f}_{\text{RF},1},\mathbf{f}_{\text{RF},2},...,\mathbf{f}_{\text{RF},N_\text{RF}})$.

The subproblem in \eqref{pr: PCmatchingSub} can then be solved by using the BCD method that iteratively updates $\mathbf{f}_{\text{RF},i}$ and $\mathbf{F}_\text{BB}\left(i, :\right)$. Specifically, when fixing $\mathbf{f}_{\text{RF},i}$, the optimal digital precoder corresponding to RF chain $i$ is given as:
\begin{equation}
\label{eq: PCoptFBB}
    \mathbf{F}_\text{BB}\left(i,:\right) = \left(\mathbf{f}_{\text{RF},i}^{H}\mathbf{f}_{\text{RF},i}\right)^{-1} \mathbf{f}_{\text{RF},i}^{H}\mathbf{F}_\text{PC}^{\text{opt},i} = \frac{N_\text{t}}{N_\text{RF}}\mathbf{f}_{\text{RF},i}^{H}\mathbf{F}_\text{PC}^{\text{opt},i}.
\end{equation}
On the other hand, when fixing $ \mathbf{F}_\text{BB}\left(i,:\right)$, subproblem \eqref{pr: PCmatchingSub} can be equivalently written as: 
\begin{equation}
    \begin{aligned}
    \label{eq:opt_subarray_analog}
        &\max_{\mathbf{f}_{\text{RF},j}} && \mathrm{Re}\left\{\mathrm{tr}\left(\mathbf{F}_\text{PC}^{\text{opt},i}\mathbf{F}_\text{BB}^H\left(i, :\right)\mathbf{f}^H_{\text{RF},i}\right)\right\}\\
        & \;\;\text{s.t.} && |\mathbf{f}_{\text{RF},i}\left(j\right)| = 1, \forall j,
    \end{aligned}
\end{equation}
where the objective function is the real part of the inner product of $\mathbf{F}_\text{PC}^{\text{opt},i}\mathbf{F}_\text{BB}^H$ and $\mathbf{f}_{\text{RF},i}$. Hence, the optimal solution for \eqref{eq:opt_subarray_analog} is $\mathbf{f}_{\text{RF},i} = e^{j\angle\left(\mathbf{F}_\text{opt}^{(i)}\mathbf{F}_\text{BB}^H\right)}$.
By iteratively updating $\mathbf{F}_\text{BB}\left(i,:\right)$ and $\mathbf{f}_{\text{RF},i}$ for all $i$ whose RF chains are suggested to be turned on in $\mathbf{F}_\text{PC}^{\text{opt}}$ until convergence, we can obtain the effective design of the hybrid precoding and RF chain selection for the system with PC architecture. We stress that as mentioned above, the hybrid precoding corresponding to different RF chains can be separately designed with our proposed approach. In addition, if the power constraint in \eqref{pr: PCmatching} is violated, we again normalize the power of $\mathbf{F}_\text{BB}$ to obtain $\|\mathbf{F}_\text{RF}\mathbf{F}_{\text{BB}}\|_F^2 = P_\text{tx}$. Finally, we again note that similar to the previous cases, our approach here can be directly extended to obtain the corresponding SE-power consumption tradeoff design approach.

\section{Convergence and Complexity Analysis}

In this section, the convergence behavior and complexity of our proposed approaches are analyzed.

\subsection{Convergence Analysis}

\label{sec: FD_complexity}

Since the proposed hybrid precoding designs are largely based on the proposed FD design approaches in Sec. \ref{sec_3} and the solution approaches for matching problems in \eqref{pr: FCmatching} and \eqref{pr: PCmatching} are based on standard BCD procedure that guarantees convergence \cite{BCD}, our convergence analysis in this subsection focuses on proposed FD precoding design approach in Sec, \ref{sec_FD_prop}. Specifically, from Alg. \ref{alg_FDInnerLoop}, we can see that the proposed precoding and RF chain selection approach includes both the outer and inner loops. Furthermore, although $\lambda$ could in practice be dynamically increased during the iterations, for the purpose of analysis, we assume that a suitable deterministic value of $\lambda$ is used throughout the iterations. We will later explain why the overall approach could stop when $\lambda\rightarrow\infty$. 
To show the convergence of Alg. \ref{alg_FDInnerLoop} when fixing $\lambda$, our idea is to show that the value of the objective function value in \eqref{pr: FD_QT}, namely, $\hat{EE}\left(\mathbf{F}^{r+1},\mu,\lambda\right)=2\mu R^\frac{1}{2}\left(\mathbf{F}\right) - \mu^2\hat{P}_\text{total}\left(\mathbf{F}, \lambda\right)$, with a given $\lambda$ would increase in each iteration. In other words, we would like to show that when using Alg. \ref{alg_FDInnerLoop}, the following is satisfied for each iteration: $\hat{EE}\left(\mathbf{F}^{r+1},\mu^{r+1},\lambda\right) \geq \hat{EE}\left(\mathbf{F}^r,\mu^{r},\lambda\right)$.

To this end, we let 
\begin{equation}
    \tilde{EE}'\left(\mathbf{F}, \mu, \mathbf{U}, \mathbf{W}\right) =  2\mu \tilde{R}^\frac{1}{2}\left(\mathbf{F}, \mathbf{U}, \mathbf{W}\right) - \mu^2 \hat{P}_\text{total}\left(\mathbf{F}, \lambda\right)
\end{equation}
and let 
\begin{equation}
    \tilde{EE}\left(\mathbf{F}, \mu, \mathbf{U}, \mathbf{W}; \mathbf{F}^r\right) =  2\mu \tilde{R}^\frac{1}{2}\left(\mathbf{F}, \mathbf{U}, \mathbf{W}\right) - \mu^2 \tilde{P}_\text{total}\left(\mathbf{F}, \lambda;\mathbf{F}^r\right).
\end{equation}
Then, the following inequalities must be satisfied:
\begin{equation}
\label{eq:converge_proof}
    \begin{aligned}
        \hat{EE}\left(\mathbf{F}^{r+1},\mu^{r+1},\lambda\right) & \overset{(a)}{=}   \tilde{EE}'\left(\mathbf{F}^{r+1}, \mu^{r+1}, \mathbf{U}^{r+2}, \mathbf{W}^{r+2}\right)\\
        & \overset{(b)}{\geq}   \tilde{EE}\left(\mathbf{F}^{r+1}, \mu^{r+1}, \mathbf{U}^{r+1}, \mathbf{W}^{r+1}; \mathbf{F}^r\right)\\ 
        &  \overset{(c)}{\geq} \tilde{EE}\left(\mathbf{F}^r, \mu^{r+1}, \mathbf{U}^{r+1}, \mathbf{W}^{r+1}; \mathbf{F}^r\right) \\
        & \overset{(d)}{=} \hat{EE}\left(\mathbf{F}^r,\mu^{r},\lambda\right),  
    \end{aligned}
\end{equation}
where $(a)$ and $(b)$ in \eqref{eq:converge_proof} are satisfied because
\begin{equation}
\begin{aligned}
    \hat{EE}\left(\mathbf{F}^{r+1},\mu^{r+1},\lambda\right) &= \max_{\mathbf{U},\mathbf{W}} \;\; \tilde{EE}\left(\mathbf{F}^{r+1}, \mu^{r+1}, \mathbf{U}, \mathbf{W} \right) \\
    &\geq\tilde{EE}\left(\mathbf{F}^{r+1}, \mu^{r+1}, \mathbf{U}^{r+1}, \mathbf{W}^{r+1}; \mathbf{F}^{r+1}\right)
\end{aligned}
\end{equation}
according to the adopted WMMSE method and that SCA method requires the convexified objective function to be lower bound of the original objective function; $(c)$ in \eqref{eq:converge_proof} is satisfied because \eqref{pr: subproblemFD} is a convex problem whose optimal solution is obtainable; 
and finally, $(d)$ in \eqref{eq:converge_proof} is satisfied because
\begin{equation}
    \begin{aligned}
        &\tilde{EE}\left(\mathbf{F}^r, \mu^{r+1}, \mathbf{U}^{r+1}, \mathbf{W}^{r+1}; \mathbf{F}^r\right)=\hat{EE}\left(\mathbf{F}^{r},\mu^{r+1},\lambda\right)\\
        &\geq \hat{EE}\left(\mathbf{F}^{r},\mu^{r},\lambda\right),
    \end{aligned}
\end{equation}
as $\mu^{r+1} =   R^\frac{1}{2}\left(\mathbf{F}^r\right) / \hat{P}_\text{total}\left(\mathbf{F}^r, \lambda\right)$ is the optimal solution. Finally, since \eqref{eq:converge_proof} shows that $\hat{EE}$ can be monotonically increased in each iteration and there must exist a EE upper bound, Alg. \ref{alg_FDInnerLoop} must converge when fixing a $\lambda$. 

If $\lambda\rightarrow\infty$, the slope of the hyperbolic tangent function becomes increasingly steep, eventually approaching $90$ degrees, making the solution  easily trapped at certain point. Thus, Alg. \ref{alg_FDInnerLoop} should automatically stop when $\lambda$ is large enough.


\subsection{Complexity Analysis}

Here, the complexity of the proposed FD precoding and RF selection approaches is first analyzed. Based on this analysis, the complexity of the hybrid precoding and RF selection approaches is then presented.

\subsubsection{Complexity for the Pproposed FD Precoding and RF Chain Selection Approaches}

To analyze the complexity of the proposed iterative approach in Alg. \ref{alg_FDInnerLoop}, we see that the computational complexity of the inner loop is dominated by solving \eqref{pr: benchmarkReform} using the convex solver, as other steps are with closed-form expressions. Thus, supposing that the interior-point method is used to solve the convex problem, the complexity of each iteration of the inner loop is $\mathcal{O}\left(\left(N_\text{t}N_\text{s}N_\text{UE}N_\text{sub}\right)^{3.5}\log\left(\epsilon^{-1}\right)\right)$, where $N_\text{t}N_\text{s}N_\text{UE}N_\text{sub}$ is the total number variables to optimize and $\epsilon$ is the required accuracy \cite{wright1997primal}. We then denote the required number for the inner-loop iterations to converge as $R_{\text{inner}}$ and the required number of outer-loop iterations to converge as $R_{\text{outer}}$. The complexity of our proposed design approach in Sec. \ref{sec_FD_prop} is derived as $\mathcal{O}\left(R_{\text{outer}}R_{\text{inner}}\left(N_\text{t}N_\text{s}N_\text{UE}N_\text{sub}\right)^{3.5}\log\left(\epsilon^{-1}\right)\right)$.

Subsequently, we analyze the complexity of brute-force-based and greedy-based approaches in Sec. \ref{sec: FD_refScheme}. We see that when given as an RF chain selection, the subsequent precoding design follows the the inner-loop steps of Alg. \ref{alg_FDInnerLoop}. Thus, the above complexity analysis directly applies. Then, since the brute-force-based approach requires $2^{N_\text{t}}-1$ trials on the RF chain selection, its complexity is $\mathcal{O}\left(2^{N_\text{t}}R_{\text{inner}}\left(N_\text{t}N_\text{s}N_\text{UE}N_\text{sub}\right)^{3.5}\log\left(\epsilon^{-1}\right)\right)$. On the other hand, with the greedy-based approaches, the worst-case number of trials is at the order of $\frac{N_\text{t}^2+N_\text{t}}{2}$, which happens when eventually only $1$ RF chain is selected, and in each stage, the RF chains are never turned off until the last order of that stage. Hence, the complexity of the greedy-based approach is $\mathcal{O}\left(\frac{N_\text{t}^2+N_\text{t}}{2}R_{\text{inner}}\left(N_\text{s}N_\text{UE}N_\text{sub}\right)^{3.5}\log\left(\epsilon^{-1}\right)\right)$. By comparing between complexity of different approaches, we see that the main difference lies on the number of required trials. By our empirical experience, $R_{\text{outer}}$ in our proposed approach is basically some constant irrespective to other system parameters. Thus, our proposed approach in general has lower complexity than the brute-force-based and greedy-based approaches.

\subsubsection{Complexity for the Pproposed Hybrid Precoding and RF Chain Selection Approaches}

The complexity of our design approaches for FC and PC architectures are composed of two parts. The first part is to find the optimal reference FD precoding used for constructing the matching problem for the hybrid precoding design, and the second part is to solve the constructed matching problems. Then, since the first part adopts our proposed approach in Sec. \ref{sec_FD_prop}, the complexity simply follows the aforementioned procedure, leading again to $\mathcal{O}\left(R_{\text{outer}}R_{\text{inner}}\left(N_\text{t}N_\text{s}N_\text{UE}N_\text{sub}\right)^{3.5}\log\left(\epsilon^{-1}\right)\right)$ for both FC and PC architectures. On the other hand, the complexity analysis for the second part follows the similar procedure as that in \cite{yu2016alternating}. Nevertheless, the second part of the solution framework indeed adopts some closed-form solutions, while the first part requires solving the convex problem, the overall complexity is dominated by the first part, and thus the analysis results for the second part are omitted for brevity.

\section{Computer Simulations}
\label{sec: numericalResults}

\subsection{Simulation Setups}

In this section, we evaluate the proposed designs by examining the tradeoff between EE and detection probability with fixed false alarm rates. Unless otherwise indicated, we consider $f_c=73$ GHz, $\Delta f=240$ KHz, $N_\text{sym}=16$, $N_\text{cp}=8$, $N_\text{s}=2$, $d_\text{tx}=d_\text{rx}=\frac{\lambda}{2}$, $N_r=2$,  $N_\text{r}^\text{sen}=16$, $N_\text{UE}=2$ and $\eta_\text{PA} = 1$. In addition, the communication SNR $\rho_\text{c}/\sigma_\text{n}^2=10$ dB and $64$-QAM is adopted. The power consumption parameters of hardware devices are with $P_\text{RF}=300$mW, $P_\text{BB} = 200$mW, and $P_\text{PS}=50$mW \cite{li2020dynamic}. We adopt the $73$ GHz channel model proposed in \cite{samimi20163}. Also, $\lambda=1$ is adopted for initializing our proposed approaches and the update rate is $\nu= 3\sim 5$. 
We consider two setups for simulations, and the detailed parameters are provided in Table \ref{tab: parameter}, where $P_\text{FA}$ is the required false alarm rate. In addition, The target of setup 1 is with range $156$ m, Doppler velocity $-61$ m/s, and angle $27$ degree; targets of setup 2 are with ranges $78$ m and $370$ m, Doppler velocities $-99$ m/s and $74$ m/s, and angles $-3$ degree and $27$ degree, respectively. 

For comparisons, we compare between the proposed iterative, brute-force-based, and greedy-based approaches in Sec. \ref{sec: FD_refScheme}, labeled as ``proposed'', ``brute force'', and ``greedy'', respectively. In addition, we also compare with random and all-selection approaches, where the random selection randomly selects RF chains to be turned on to use subject to the sensing requirement can be satisfied, while the all-selection turns on all RF chains. Note that the aforementioned random and all-selection approaches only determine the RF chain selection results, and thus they still need to use our proposed precoding design approach in Sec. \ref{sec_FD_prop}, making the comparisons to focus on validating the efficacy of our RF chain selection. Furthermore, we compare our approaches with some other reference schemes in the literature. For the FD and FC cases, we compare with the approaches in \cite{kaushik2022green}, \cite{nguyen2023multiuser} and \cite{liao2025design}; and for the PC case, we compare with approaches in \cite{kaushik2022green}. Note that since the design in \cite{kaushik2022green} is for narrowband MIMO ISAC systems, to compare with it, we conduct some modifications which adopt equal power allocation among all subcarriers. Finally, we note that due to high computational complexity, the brute-force-based scheme is not used in setup 2. In addition, for the FC architecture, since our proposed method already exhaustively searches all possible antenna configurations, there is no need to compare with the corresponding brute-force-based and proposed greedy-based approaches. When considering using the proposed brute-force-based and greedy-based approaches with the PC architecture, we simply let $\mathbf{F}_\text{PC}^\text{opt}$ in \eqref{pr: PCmatching} to be designed by using the brute-force-based and greedy-based approaches in Sec. \ref{sec: FD_refScheme}, while the remaining design procedure is the same.

\begin{table}
\centering
\caption[Simulation Setups]{Simulation Setups}
\label{tab: parameter}
\begin{tabular}{|c|c|c|c|c|c|c|}
\hline
 & \multicolumn{3}{|c|}{Setup 1} & \multicolumn{3}{|c|}{Setup 2} \\ \hline
Structure & FD & FC & PC & FD & FC & PC \\ \hline
$N_\text{t}$ & 8 & \multicolumn{2}{|c|}{16} & \multicolumn{3}{|c|}{32} \\ \hline
$N_\text{RF}$ & 8 & 4,\,8 & 4,\,8 & 32 & 8,\,16 & 8,\,16 \\ \hline
$N_\text{sub}$ & \multicolumn{3}{|c|}{4} & \multicolumn{3}{|c|}{32} \\ \hline
$\beta_\text{t}^2/\sigma_\text{n}^{\text{sen}^2}$ & $-15$\,dB & \multicolumn{2}{|c|}{$-20$\,dB} & \multicolumn{3}{|c|}{$-25$\,dB} \\ \hline
$N_\text{tar}$ & \multicolumn{3}{|c|}{1} & \multicolumn{3}{|c|}{2} \\ \hline
$P_\text{FA}$ & \multicolumn{3}{|c|}{$10^{-2}$} & \multicolumn{3}{|c|}{$10^{-3}$} \\ \hline
\end{tabular}
\end{table}


\subsection{Simulation Results}

We first consider setup 1 and evaluate the EE-detection rate tradeoff under the FD architecture in Fig. \ref{fig: FDsmall_tradeoff}. Results show that the brute-force-based approach achieves the best tradeoff performance. In addition, we see that both our proposed iterative approach and greedy-based approach perform closely to the brute-force-based approach, validating the effectiveness of our approaches. It should be noted that the brute-force-based approach has notoriously high complexity, while proposed greedy-based and our iterative approach are with much lower complexity. Finally, we see that all our proposed approaches outperform the reference schemes in the literature, showing the superiority of our proposed approaches.

Fig. \ref{fig: FDsmall_NRF} shows the number of selected RF chains under the FD architecture for setup 1. We see that when the sensing requirement $P_\text{th}$ is larger, our approaches tend to enable more RF chains to satisfy the sensing requirement. In addition, we see that the number of activated RF chains for the greedy-based approach is very close to that of the brute-force-based approach, implying that the greedy-based approach can perform more closely to the optimal approach, while our iterative approach behaves slightly different. However, since our precoding approach can adapt to different numbers of activated RF chains, the overall EE-detection performance of our proposed iterative approach is not too different from those of the brute-force-based and greedy-based approaches. Finally, we see also that the random-based approach selects more RF chains with respect to $P_\text{th}$. This is because the random-based approach still needs to satisfy the sensing requirement, forcing it to select more RF chains when $P_\text{th}$ is larger. In Fig. \ref{fig: FDlarge_tradeoff}, we evaluate our approaches of FD architecture in setup 2, and due to complexity issue, the brute-force-based approach is not evaluated in setup 2. Results again validate that our approaches are more superior than the reference schemes. Also, results show that our proposed iterative and greedy-based approaches have almost identical performance.


\begin{figure}[t]
\centering
    \includegraphics[width=0.8\linewidth]{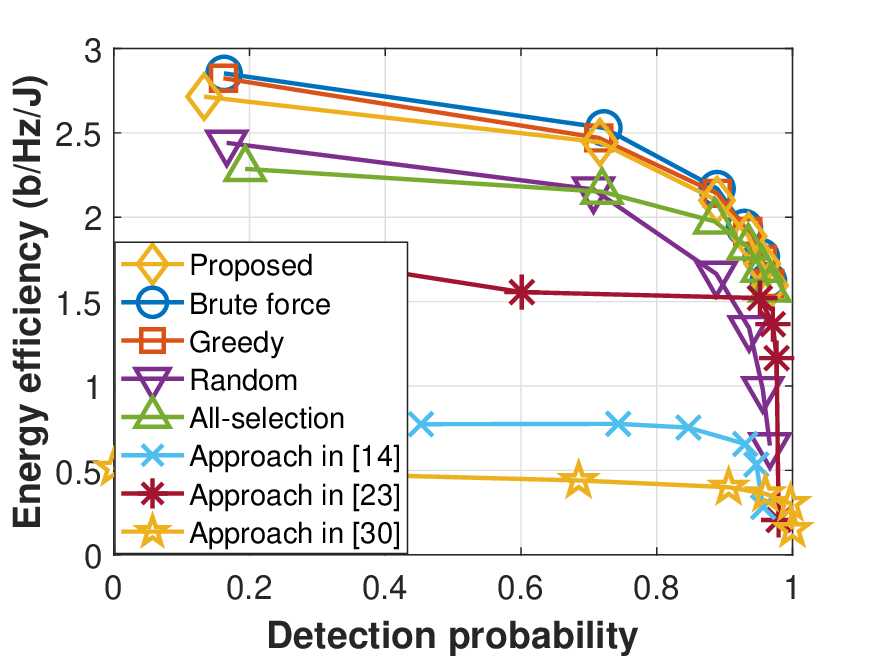} 
    \vspace{-5pt}
    \caption{EE-detection probability tradeoff for approaches with FD architecture in setup 1.}
    \label{fig: FDsmall_tradeoff}
    \vspace{-10pt}
\end{figure}

\begin{figure}[t]
\centering
    \includegraphics[width=0.77\linewidth]{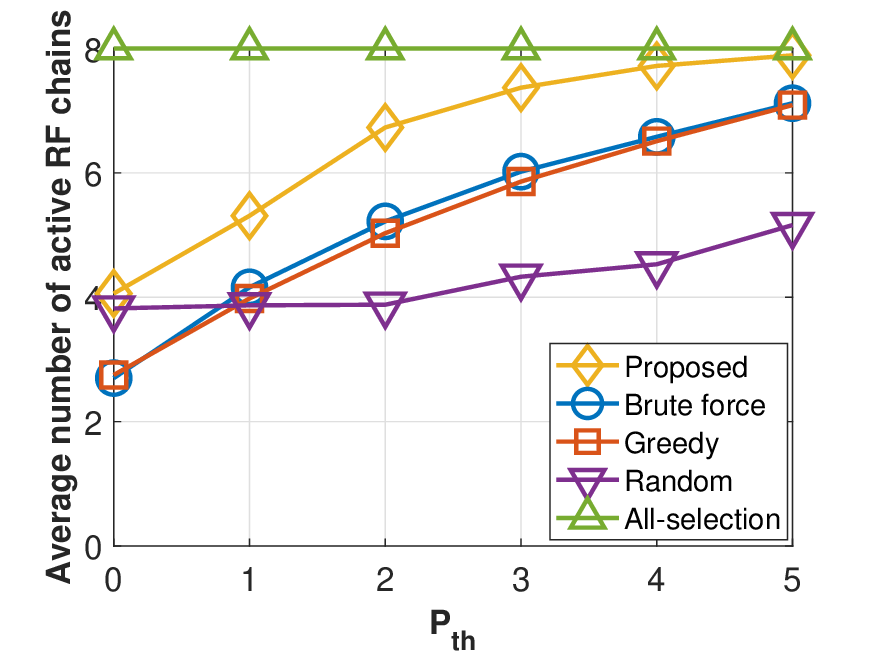}
    \vspace{-5pt}
    \caption{Number of activated RF chains as a function of $P_\text{th}$ for approaches with FD architecture in setup 1.}
    \label{fig: FDsmall_NRF}
    \vspace{-10pt}
\end{figure}

\begin{figure}[t]
\centering
    \includegraphics[width=0.8\linewidth]{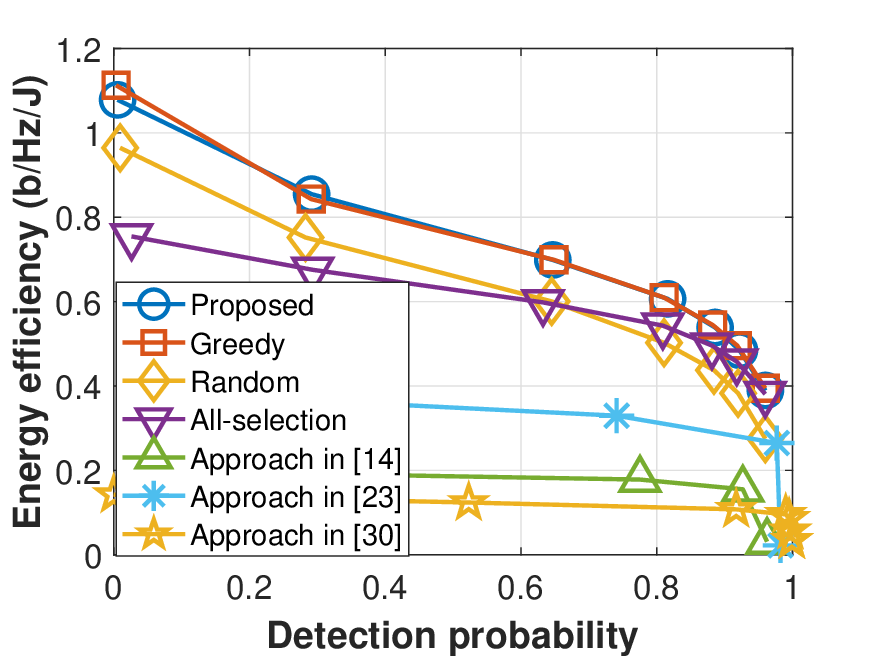} 
    \vspace{-5pt}
    \caption{EE-detection probability tradeoff for approaches with FD architecture in setup 2.}
    \label{fig: FDlarge_tradeoff}
    \vspace{-10pt}
\end{figure}

In Fig.~\ref{fig: FCsmall_tradeoff}, we evaluate our hybrid precoding and RF chain selection approach under FC architecture in setup 1. Note that in the FC architecture, our main proposed approach is identical to the brute-force-based approach. Hence, from the results, we see that our proposed approach consistently outperforms all other reference schemes in both cases where $N_\text{RF}=4$ and $N_\text{RF}=8$. In addition, we also see that the EE-detection probability tradeoff of our approach in both $N_\text{RF}=4$ and $N_\text{RF}=8$ cases are similar, showing that our proposed approach can automatically adjust the number of activated RF chains to the suitable number even if the setups of the FC architecture are different. On the contrary, all other reference schemes cannot provide such good behavior. As a result, the EE performance of the reference schemes are in general worse when $N_\text{RF}=8$ is adopted, as they commonly use more number of RF chains than necessary. In Fig.~\ref{fig: FClarge_tradeoff}, we evaluate our approaches of FC architecture in setup 2. Results again validate that our approach is more superior than other reference approaches and that our approach can appropriately adjust the number of activated RF chains. In Fig. \ref{fig: PCsmall_tradeoff}, we evaluate the performance of our proposed designs under PC architecture for setup 1. Results show that our approaches can outperform the reference schemes. In addition, all other observations under FC architecture can be observed again.

\begin{figure}[!t]
    \begin{subfigure}{0.5\textwidth}
    \centering
        \includegraphics[width=0.8\linewidth]{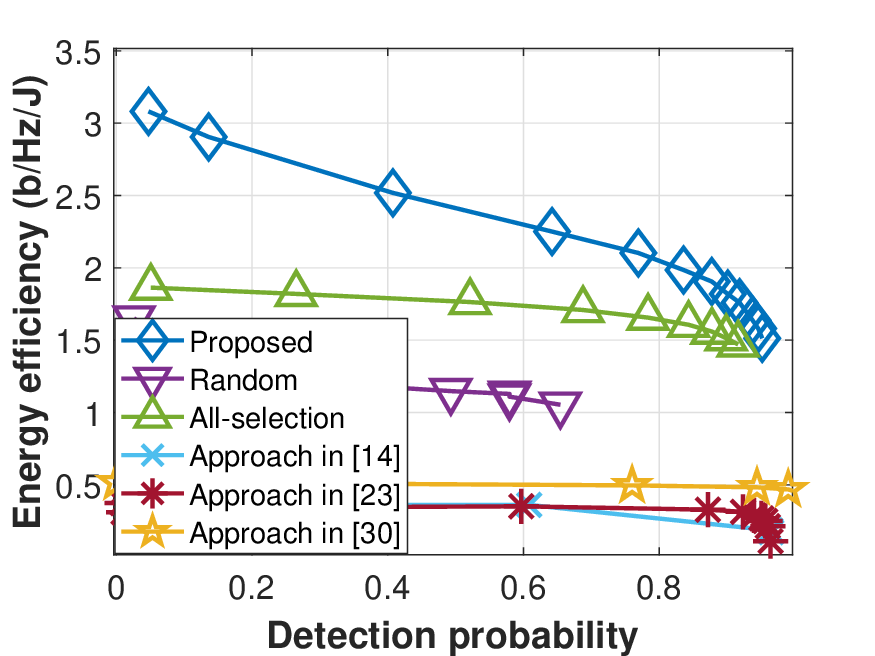}
        \caption{Number of RF chains is $4$.}
        \label{fig: FCsmall_4RF_tradeoff}
    \end{subfigure}
    \begin{subfigure}{0.5\textwidth}
    \centering
        \includegraphics[width=0.8\linewidth]{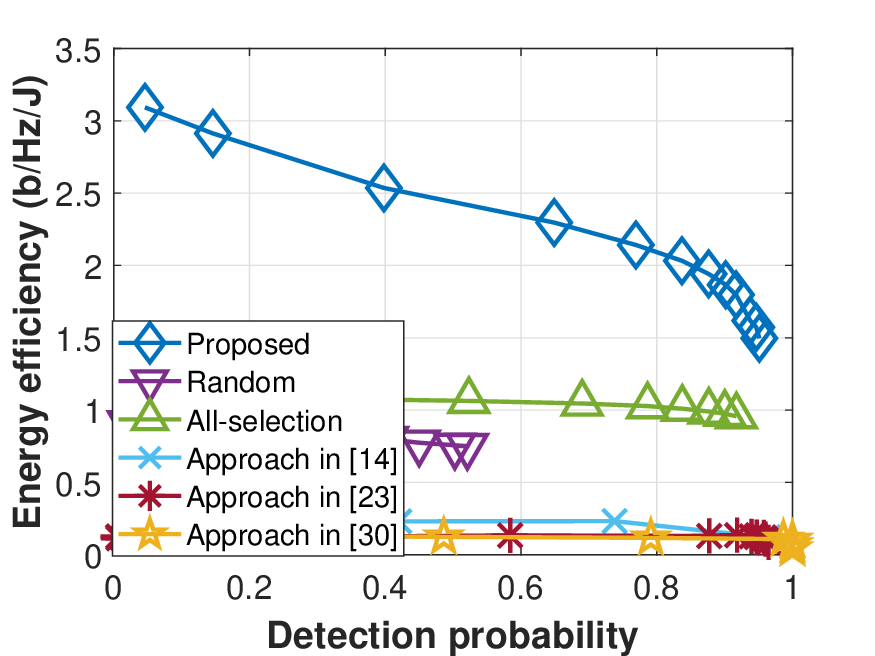}
        \caption{Number of RF chains is $8$.}
        \label{fig: FCsmall_8RF_tradeoff}
    \end{subfigure}
    \vspace{-10pt}
    \caption{EE-detection probability tradeoff for approaches with FC architecture in setup 1.}
    \vspace{-10pt}
    \label{fig: FCsmall_tradeoff}
\end{figure}

\begin{figure}[!t]
    \begin{subfigure}{0.5\textwidth}
    \centering
        \includegraphics[width=0.8\linewidth]{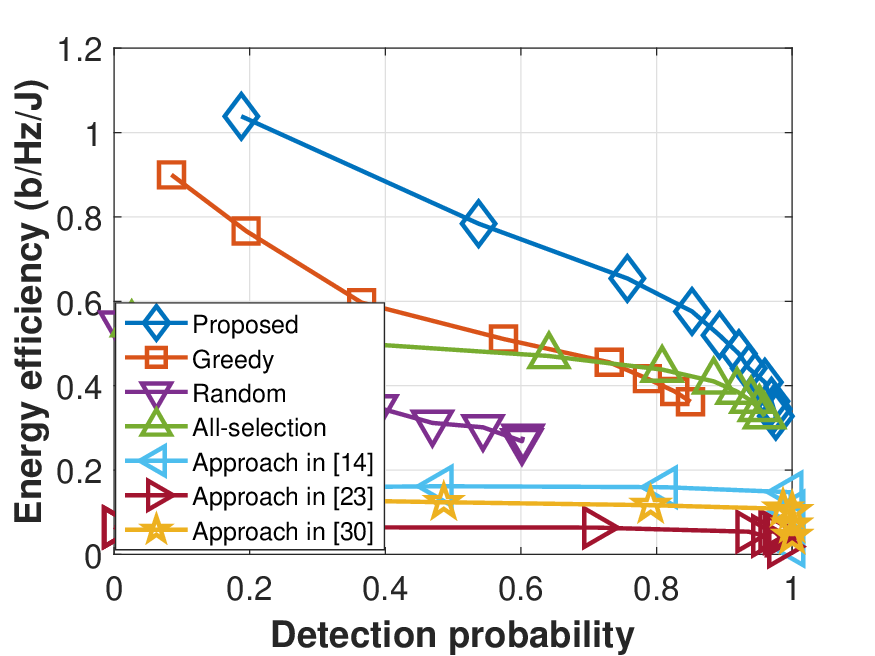} 
        \caption{Number of RF chains is $8$.}
        \label{fig: FClarge_8RF_tradeoff}
    \end{subfigure}
    \begin{subfigure}{0.5\textwidth}
    \centering
        \includegraphics[width=0.8\linewidth]{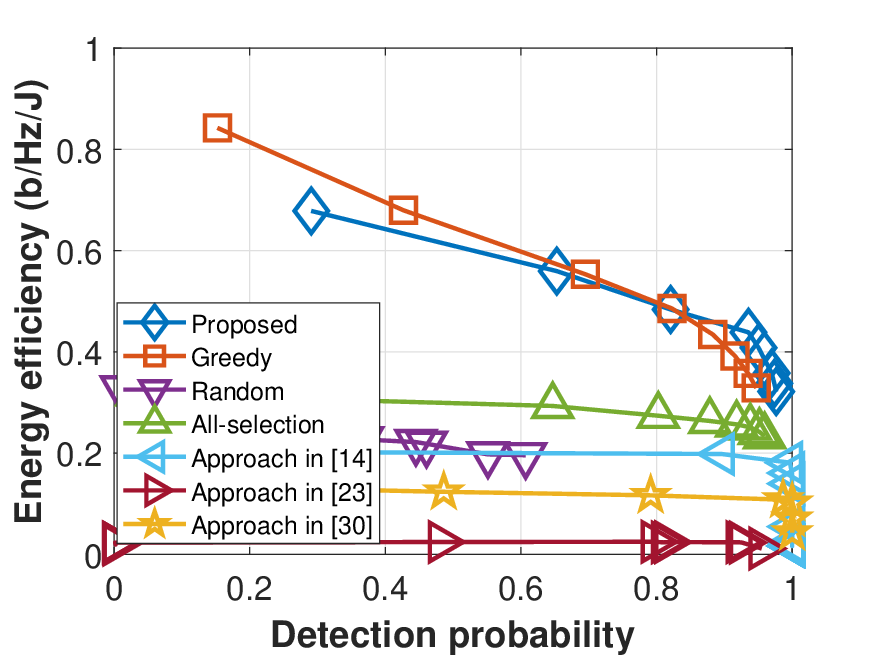}
        \caption{Number of RF chains is $16$.}
        \label{fig: FClarge_16RF_tradeoff}
    \end{subfigure}
    \vspace{-10pt}
    \caption{EE-detection probability tradeoff for approaches with FC architecture in setup 2.}
    \vspace{-5pt}
    \label{fig: FClarge_tradeoff}
\end{figure}


\begin{figure}[!t]
    \begin{subfigure}{0.5\textwidth}
    \centering
        \includegraphics[width=0.8\linewidth]{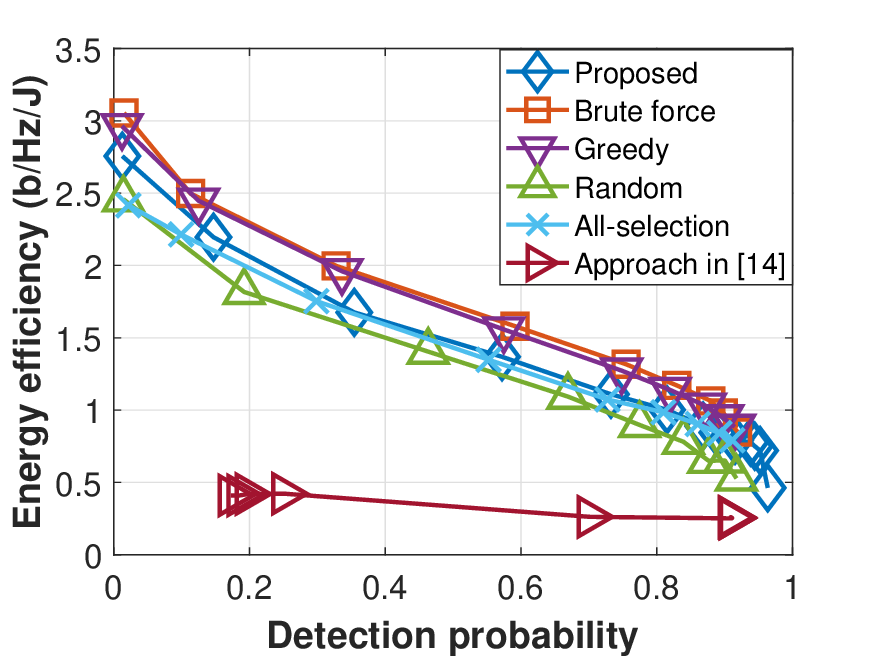} 
        \caption{Number of RF chains is $4$.}
        \label{fig: PCsmall_4RF_tradeoff}
    \end{subfigure}
    \begin{subfigure}{0.5\textwidth}
    \centering
        \includegraphics[width=0.8\linewidth]{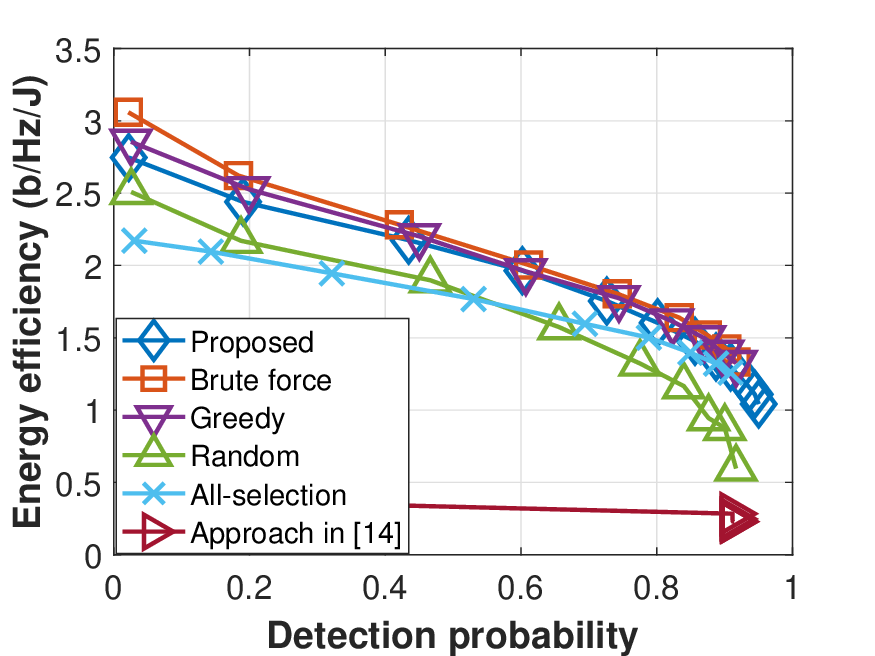}
        \caption{Number of RF chains is $8$.}
        \label{fig: PCsmall_8RF_tradeoff}
    \end{subfigure}
    \vspace{-10pt}
    \caption{EE-detection probability tradeoff for approaches with PC architecture in setup 1.}
    \label{fig: PCsmall_tradeoff}
\end{figure}


\begin{figure}[!t]
   \centering
\includegraphics[width=1.0\linewidth]{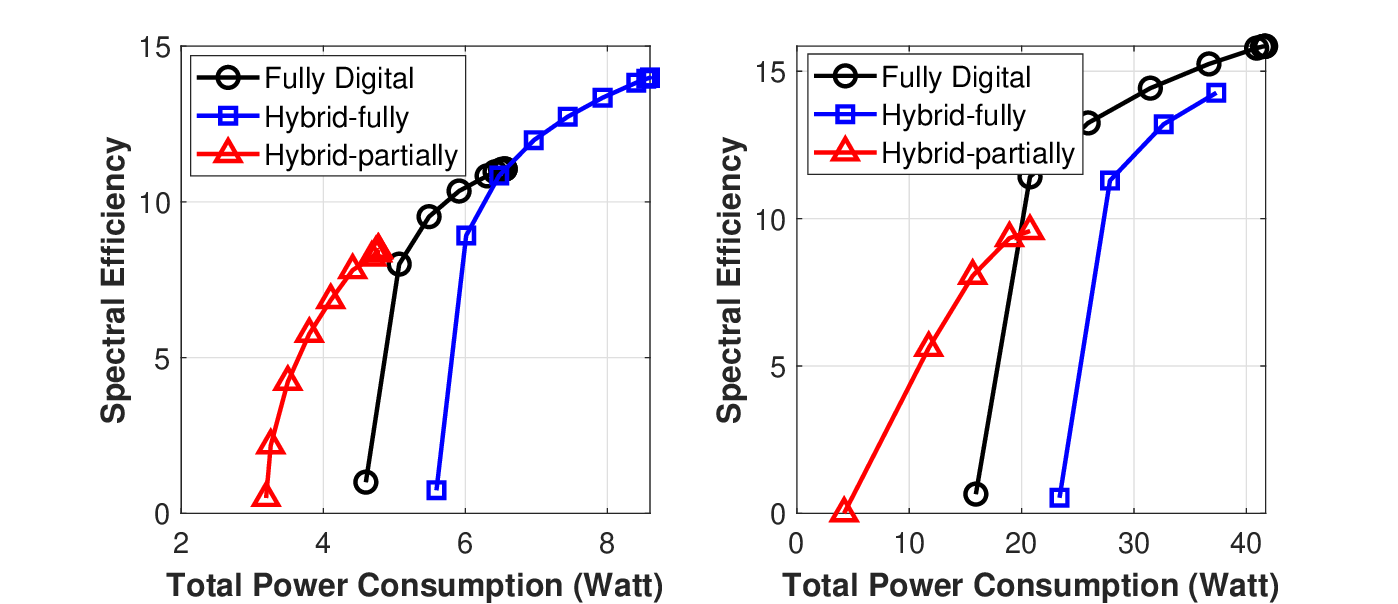}
   \vspace{-5pt}
   \caption{Evaluation of the tradeoff between spectral efficiency and power consumption. The left-hand figure is with setup 1 and the hybrid structures have $N_\text{RF}=4$. The left-hand figure is with setup 2 and the hybrid structures have $N_\text{RF}=8$.}
   \vspace{-5pt}
   \label{fig:SE_Power_Tradeoff}
\end{figure}

\begin{figure}[!t]
    \centering
\includegraphics[width=0.8\linewidth]{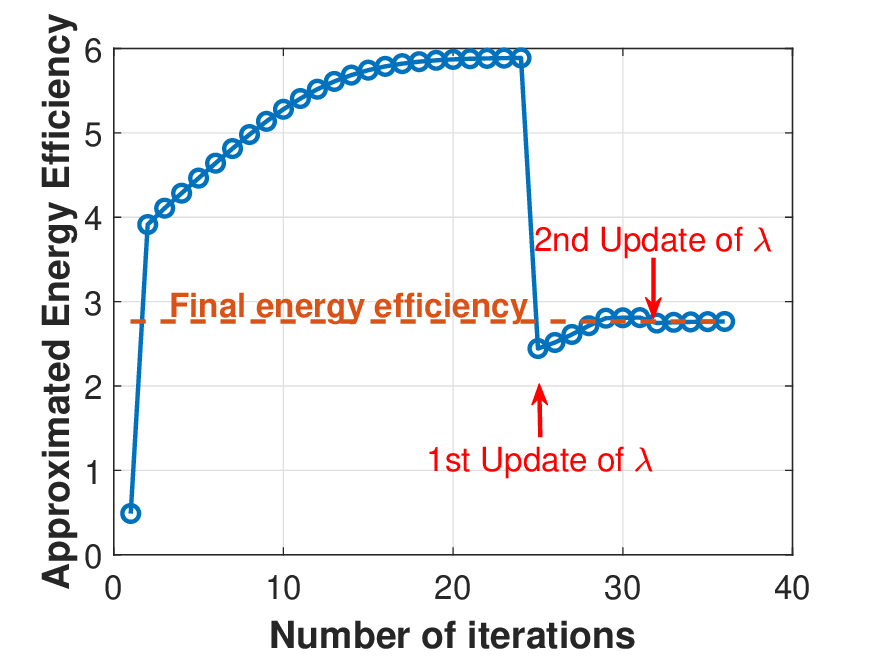}
    \caption{Convergence evaluation of Alg. \ref{alg_FDInnerLoop} under setup 1.}
    \vspace{-10pt}
    \label{fig:EE_convergence}
\end{figure}

In Fig. \ref{fig:SE_Power_Tradeoff}, we evaluate the SE and power consumption tradeoff design approach discussed in Sec. \ref{sec_3}-D. Note the solution points of the original EE-best solutions indeed are located on their corresponding SE-power tradeoff curves. We see that our approach can successfully provide tradeoff between SE and energy consumption. In addition, while not under a very fair comparison, we observe that the design under fully-connected hybrid architecture is more energy-consuming than that of the fully-digital architecture. This might be because that the total energy consumption due to the large amount of phase shifters used in the fully-connected architecture could be large, making the fully-connected architecture less energy efficient. This implies that when adopting the hybrid architecture for high EE and low energy consumption, the partially-connected structure is more preferable.

In Table \ref{tab:runtime_setup1}, we provide the runtime complexity evaluation of different approaches in terms of number of seconds for obtaining the precoding and RF chain selection design one time in setups 1 and 2. We see that the brute-force-based approach has much higher complexity as compared to other approaches. In addition, our proposed iterative approach has lower complexity than the proposed greedy-based approach. Furthermore, when comparing between results in setups 1 and 2, our proposed iterative approach has a better scalability than the greedy-based approach. Finally, to validate the convergence of our proposed approaches, we evaluate the convergence of Alg. \ref{alg_FDInnerLoop}, since most of our proposed design approaches depend on it. The results under FD architecture of setup 1 are illustrated in Fig. \ref{fig:EE_convergence}, where the function value shown in the figure follows the objective function in \eqref{pr: WMMSE}. It can be observed when given a $\lambda$, the function value can be monotonically improved until convergence, validating the convergence property of our algorithm. 

\section{Conclusions}

To fill the gap of energy efficient designs for ISAC systems, we investigated the joint precoding and RF chain selection design approaches for both FD and hybrid precoding architectures in multi-user MIMO-OFDM ISAC systems from EE aspect. To this end, we first proposed an EE maximization problem that guarantees sensing performance. To solve the problem for the FD architecture, QT transformation and WMMSE conversion were used along with the SCA method and hyperbolic tangent-based approximation. Subsequently, the hybrid precoding and RF chain selection designs for both FC and PC architectures were developed based on the proposed FD design. Complexity and convergence analysis of our approaches were provided, showing the theoretical convergence guarantee of our approaches. Also, we discuss how to extend our design approaches to derive SE-power consumption tradeoff designs. Simulation results validates that our proposed designs significantly outperform the reference schemes in terms of the EE-detection tradeoff.

\begin{table}[!t]
  \centering
  \caption{Runtime Comparisons in Setups 1 and 2}
  \label{tab:runtime_setup1}
  \renewcommand{\arraystretch}{1.15}  
  \begin{tabular}{|c|c|c|c|}
    \hline
    \multicolumn{1}{|c|}{\textbf{Setup 1}} &
    \textbf{Proposed} &
    \textbf{Greedy } &
    \textbf{Brute force} 
     \\ \hline
    FD & 3.67   & 6.69 & 160.44  \\ \hline
    FC & 30.62  & --     & --    \\ \hline
    PC & 13.25  & 25.87  & 662.42  \\ \hline
    \multicolumn{1}{|c|}{\textbf{Setup 2}} &
    \textbf{Proposed} &
    \textbf{Greedy} &
    \\ \hline
    FD & 21.65  & 152.45 &    \\ \hline
    FC & 583.23 & --     &  \\ \hline
    PC & 55.86  & 267.69 &   \\ \hline
  \end{tabular}
\end{table}


\appendices

\bibliographystyle{IEEEtran}
\bibliography{Bib_3C_2020_11_23}

\begin{thebibliography}{10}
\providecommand{\url}[1]{#1}
\csname url@samestyle\endcsname
\providecommand{\newblock}{\relax}
\providecommand{\bibinfo}[2]{#2}
\providecommand{\BIBentrySTDinterwordspacing}{\spaceskip=0pt\relax}
\providecommand{\BIBentryALTinterwordstretchfactor}{4}
\providecommand{\BIBentryALTinterwordspacing}{\spaceskip=\fontdimen2\font plus
\BIBentryALTinterwordstretchfactor\fontdimen3\font minus
  \fontdimen4\font\relax}
\providecommand{\BIBforeignlanguage}[2]{{%
\expandafter\ifx\csname l@#1\endcsname\relax
\typeout{** WARNING: IEEEtran.bst: No hyphenation pattern has been}%
\typeout{** loaded for the language `#1'. Using the pattern for}%
\typeout{** the default language instead.}%
\else
\language=\csname l@#1\endcsname
\fi
#2}}
\providecommand{\BIBdecl}{\relax}
\BIBdecl

\bibitem{ZhangJCS2022}
J.~A. Zhang, M.~L. Rahman, K.~Wu, X.~Huang, Y.~J. Guo, S.~Chen, and J.~Yuan,
  ``Enabling joint communication and radar sensing in mobile networks—a
  survey,'' \emph{IEEE Communications Surveys \& Tutorials}, vol.~24, no.~1,
  pp. 306--345, Firstquarter 2022.

\bibitem{Lu2024ISAC10}
S.~Lu, F.~Liu, Y.~Li, K.~Zhang, H.~Huang, J.~Zou, X.~Li, Y.~Dong, F.~Dong,
  J.~Zhu, Y.~Xiong, W.~Yuan, Y.~Cui, and L.~Hanzo, ``Integrated sensing and
  communications: Recent advances and ten open challenges,'' \emph{IEEE
  Internet of Things Journal}, vol.~11, no.~11, pp. 19\,094--19\,120, Jun.
  2024.

\bibitem{braun2014ofdm}
K.~M. Braun, ``{OFDM} radar algorithms in mobile communication networks,''
  Ph.D. dissertation, Karlsruhe, Karlsruher Institut f{\"u}r Technologie (KIT),
  Diss., 2014, 2014.

\bibitem{carvajal2020comparison}
G.~K. Carvajal, M.~F. Keskin, C.~Aydogdu, O.~Eriksson, H.~Herbertsson,
  H.~Hellsten, E.~Nilsson, M.~Rydstr{\"o}m, K.~V{\"a}nas, and H.~Wymeersch,
  ``Comparison of automotive {FMCW} and {{OFDM}} radar under interference,'' in
  \emph{Proc. IEEE Radar Conf.}, 2020, pp. 1--6.

\bibitem{dai2025tutorial}
Q.~Dai, Y.~Zeng, H.~Wang, C.~You, C.~Zhou, H.~Cheng, X.~Xu, S.~Jin, A.~L.
  Swindlehurst, Y.~C. Eldar \emph{et~al.}, ``A tutorial on {MIMO-OFDM ISAC}:
  From far-field to near-field,'' \emph{arXiv preprint arXiv:2504.19091}, 2025.

\bibitem{Zhang20196G}
Z.~Zhang, Y.~Xiao, Z.~Ma, M.~Xiao, Z.~Ding, X.~Lei, G.~K. Karagiannidis, and
  P.~Fan, ``6g wireless networks: Vision, requirements, architecture, and key
  technologies,'' \emph{IEEE Vehicular Technology Magazine}, vol.~14, no.~3,
  pp. 28--41, Sep. 2019.

\bibitem{Gao2016EE}
X.~Gao, L.~Dai, S.~Han, C.-L. I, and R.~W. Heath, ``Energy-efficient hybrid
  analog and digital precoding for mm{W}ave {MIMO} systems with large antenna
  arrays,'' \emph{IEEE Journal on Selected Areas in Communications}, vol.~34,
  no.~4, pp. 998--1009, Apr. 2016.

\bibitem{he2022energy}
Z.~He, W.~Xu, H.~Shen, Y.~Huang, and H.~Xiao, ``Energy efficient beamforming
  optimization for integrated sensing and communication,'' \emph{IEEE Wireless
  Commun. Lett.}, vol.~11, no.~7, pp. 1374--1378, Apr. 2022.

\bibitem{ZouEEISAC2022}
J.~Zou, Y.~Cui, Y.~Liu, and S.~Sun, ``Energy efficiency optimization for
  integrated sensing and communications systems,'' in \emph{2022 IEEE Wireless
  Communications and Networking Conference (WCNC)}, Apr. 2022, pp. 216--221.

\bibitem{zou2024energy}
J.~Zou, S.~Sun, C.~Masouros, Y.~Cui, Y.-F. Liu, and D.~W.~K. Ng,
  ``Energy-efficient beamforming design for integrated sensing and
  communications systems,'' \emph{IEEE Trans. Commun.}, Feb. 2024.

\bibitem{allu2024robust}
R.~Allu, M.~Katwe, K.~Singh, T.~Q. Duong, and C.-P. Li, ``Robust energy
  efficient beamforming design for {ISAC} full-duplex communication systems,''
  \emph{IEEE Wireless Commun. Lett.}, Jun. 2024.

\bibitem{singh2024energy}
J.~Singh, S.~Srivastava, and A.~K. Jagannatham, ``Energy-efficient hybrid
  beamforming for integrated sensing and communication enabled {mmWave} {MIMO}
  systems,'' in \emph{2024 IEEE 99th Vehicular Technology Conference
  (VTC2024-Spring)}, Dec. 2024, pp. 1--6.

\bibitem{kaushik2021hardware}
A.~Kaushik, C.~Masouros, and F.~Liu, ``Hardware efficient joint
  radar-communications with hybrid precoding and {RF} chain optimization,'' in
  \emph{Proc. IEEE Int. Conf. Commun.}, 2021, pp. 1--6.

\bibitem{kaushik2022green}
A.~Kaushik, E.~Vlachos, C.~Masouros, C.~Tsinos, and J.~Thompson, ``Green joint
  radar-communications: {RF} selection with low resolution {DAC}s and hybrid
  precoding,'' in \emph{Proc. IEEE Int. Conf. Commun.}, 2022, pp. 3160--3165.

\bibitem{dizdar2022energy}
O.~Dizdar, A.~Kaushik, B.~Clerckx, and C.~Masouros, ``Energy efficient
  dual-functional radar-communication: Rate-splitting multiple access,
  low-resolution {DAC}s, and {RF} chain selection,'' \emph{IEEE Open J. Commun.
  Soc.}, vol.~3, pp. 986--1006, Jun. 2022.

\bibitem{WuEEISAC2024}
G.~Wu, Y.~Fang, J.~Xu, Z.~Feng, and S.~Cui, ``Energy-efficient mimo integrated
  sensing and communications with on–off nontransmission power,'' \emph{IEEE
  Internet of Things Journal}, vol.~11, no.~7, pp. 12\,177--12\,191, 2024.

\bibitem{huang2023capacity}
Z.~Huang, A.~Liu, R.~Du, and T.~X. Han, ``Capacity-crb tradeoff in {OFDM}
  integrated sensing and communication systems,'' in \emph{Proc. IEEE Int.
  Conf. Commun.}, 2023, pp. 2437--2442.

\bibitem{huang2022designing}
Y.~Huang, S.~Hu, S.~Ma, Z.~Liu, and M.~Xiao, ``Designing low-{PAPR} waveform
  for {{OFDM}}-based radcom systems,'' \emph{IEEE Trans. Wireless Commun.},
  vol.~21, no.~9, pp. 6979--6993, Mar. 2022.

\bibitem{hsu2021analysis}
H.-W. Hsu, M.-C. Lee, M.-X. Gu, Y.-C. Lin, and T.-S. Lee, ``Analysis and design
  for pilot power allocation and placement in {{OFDM}} based integrated radar
  and communication in automobile systems,'' \emph{IEEE Trans. Veh. Technol.},
  vol.~71, no.~2, pp. 1519--1535, Nov. 2022.

\bibitem{xu2020joint}
Z.~Xu, A.~Petropulu, and S.~Sun, ``A joint design of {MIMO}-{OFDM}
  dual-function radar communication system using generalized spatial
  modulation,'' in \emph{2020 IEEE Radar Conference (RadarConf20)}.\hskip 1em
  plus 0.5em minus 0.4em\relax IEEE, 2020, pp. 1--6.

\bibitem{xiao2024novel}
Z.~Xiao, R.~Liu, M.~Li, Q.~Liu, and A.~L. Swindlehurst, ``A novel joint
  angle-range-velocity estimation method for {{MIMO}-{OFDM}} {{ISAC}}
  systems,'' \emph{IEEE Trans. Signal Process.}, Aug. 2024.

\bibitem{tian2021transmit}
T.~Tian, T.~Zhang, L.~Kong, and Y.~Deng, ``Transmit/receive beamforming for
  {MIMO}-{OFDM} based dual-function radar and communication,'' \emph{IEEE
  Trans. Veh. Technol.}, vol.~70, no.~5, pp. 4693--4708, 2021.

\bibitem{nguyen2023multiuser}
N.~T. Nguyen, N.~Shlezinger, Y.~C. Eldar, and M.~Juntti, ``Multiuser {MIMO}
  wideband joint communications and sensing system with subcarrier
  allocation,'' \emph{IEEE Trans. Signal Process.}, Aug. 2023.

\bibitem{wei2024precoding}
Z.~Wei, R.~Yao, X.~Yuan, H.~Wu, Q.~Zhang, and Z.~Feng, ``Precoding optimization
  for {MIMO}-{OFDM} integrated sensing and communication systems,'' \emph{IEEE
  Trans. Cogn. Commun. Netw.}, 2024.

\bibitem{Xiao2025SpaseISAC}
Z.~Xiao, R.~Liu, M.~Li, W.~Wang, and Q.~Liu, ``Sparsity exploitation via joint
  receive processing and transmit beamforming design for mimo-ofdm isac
  systems,'' \emph{IEEE Transactions on Communications}, vol.~73, no.~5, pp.
  3593--3607, May 2025.

\bibitem{Hu2022low}
X.~Hu, C.~Masouros, F.~Liu, and R.~Nissel, ``Low-{PAPR} {DFRC} {MIMO}-{OFDM}
  waveform design for integrated sensing and communications,'' in \emph{Proc.
  IEEE Int. Conf. Commun.}, 2022, pp. 1599--1604.

\bibitem{wei2023waveform}
Z.~Wei, J.~Piao, X.~Yuan, H.~Wu, J.~A. Zhang, Z.~Feng, L.~Wang, and P.~Zhang,
  ``Waveform design for {MIMO-OFDM} integrated sensing and communication
  system: An information theoretical approach,'' \emph{IEEE Trans. Commun.},
  Sept. 2023.

\bibitem{liao2024beamforming}
Z.-T. Liao, M.-C. Lee, S.-F. Wu, and T.-S. Lee, ``Beamforming design for
  multi-user {MIMO}-{OFDM} integrated sensing and communication systems,'' in
  \emph{Proc. IEEE Int. Conf. Commun.}, 2024, pp. 4488--4493.

\bibitem{chou2024robust}
C.-H. Chou, M.-C. Lee, P.-C. Kang, and T.-S. Lee, ``Robust beamforming design
  for {MIMO}-{OFDM} joint radar and communication systems,'' in \emph{Proc.
  IEEE Int. Conf. Commun.}, 2024, pp. 1189--1194.

\bibitem{liao2025design}
Z.-T. Liao, S.-F. Wu, M.-C. Lee, T.-C. Chiu, and T.-S. Lee, ``Design of joint
  transmit beamforming for multi-user mimo-ofdm integrated sensing and
  communication systems,'' \emph{IEEE Transactions on Wireless Communications},
  2025.

\bibitem{liu2021dual}
R.~Liu, M.~Li, Q.~Liu, and A.~L. Swindlehurst, ``Dual-functional
  radar-communication waveform design: A symbol-level precoding approach,''
  \emph{IEEE J. Sel. Top. Signal Process.}, vol.~15, no.~6, pp. 1316--1331,
  2021.

\bibitem{Li2025ISACSidelobe}
P.~Li, M.~Li, R.~Liu, Q.~Liu, and A.~Lee~Swindlehurst, ``{MIMO-OFDM} {ISAC}
  waveform design for range-doppler sidelobe suppression,'' \emph{IEEE
  Transactions on Wireless Communications}, vol.~24, no.~2, pp. 1001--1015,
  Feb. 2025.

\bibitem{Wu2025LowISAC}
S.-F. Wu, M.-C. Lee, and T.-S. Lee, ``Symbol-level precoding-based waveform
  design for low-{PAPR} multi-user {MIMO-OFDM} integrated sensing and
  communication systems,'' in \emph{2025 IEEE International Conference on
  Communications (ICC)}, Jun. 2025.

\bibitem{liu2024joint}
Y.-S. Liu, S.-F. Wu, M.-C. Lee, C.-H. Chou, and T.-S. Lee, ``Joint beamforming
  and subcarrier allocation design for {MIMO-OFDM} dual-functional radar and
  communication systems,'' in \emph{2024 IEEE 100th Veh. Technol. Conf.
  (VTC2024-Fall)}.\hskip 1em plus 0.5em minus 0.4em\relax IEEE, Dec. 2024, pp.
  1--7.

\bibitem{Li2025Sparse}
X.~Li, H.~Min, Y.~Zeng, S.~Jin, L.~Dai, Y.~Yuan, and R.~Zhang, ``Sparse {MIMO}
  for {ISAC}: New opportunities and challenges,'' \emph{IEEE Wireless
  Communications}, pp. 1--9, 2025.

\bibitem{Luo2025ISACStandard}
X.~Luo, Q.~Lin, R.~Zhang, H.-H. Chen, X.~Wang, and M.~Huang, ``{ISAC} – a
  survey on its layered architecture, technologies, standardizations,
  prototypes and testbeds,'' \emph{IEEE Communications Surveys \& Tutorials},
  pp. 1--1, 2025.

\bibitem{li2020dynamic}
H.~Li, M.~Li, Q.~Liu, and A.~L. Swindlehurst, ``Dynamic hybrid beamforming with
  low-resolution pss for wideband {mmWave} {MIMO}-{OFDM} systems,'' \emph{IEEE
  J. Sel. Areas Commun.}, vol.~38, no.~9, pp. 2168--2181, Jun. 2020.

\bibitem{shen2018fractional}
K.~Shen and W.~Yu, ``Fractional programming for communication systems—part
  {I}: Power control and beamforming,'' \emph{IEEE Trans. Signal Process.},
  vol.~66, no.~10, pp. 2616--2630, May 2018.

\bibitem{SCA}
M.~Razaviyayn, ``Successive convex approximation: Analysis and applications,''
  Ph.D. dissertation, University of Minnesota, 2014.

\bibitem{yu2016alternating}
X.~Yu, J.-C. Shen, J.~Zhang, and K.~B. Letaief, ``Alternating minimization
  algorithms for hybrid precoding in millimeter wave {MIMO} systems,''
  \emph{IEEE J. Sel. Topics Signal Process.}, vol.~10, no.~3, pp. 485--500,
  2016.

\bibitem{BCD}
P.~Tseng, ``Convergence of a block coordinate descent method for
  nondifferentiable minimization,'' \emph{J. Optim. theory Appl.}, vol. 109,
  pp. 475--494, 2001.

\bibitem{wright1997primal}
S.~J. Wright, \emph{Primal-dual interior-point methods}.\hskip 1em plus 0.5em
  minus 0.4em\relax SIAM, 1997.

\bibitem{samimi20163}
M.~K. Samimi and T.~S. Rappaport, ``3-{D} millimeter-wave statistical channel
  model for {5G} wireless system design,'' \emph{IEEE Trans. Microw. Theory
  Techn.}, vol.~64, no.~7, pp. 2207--2225, Jun. 2016.

\end{thebibliography}
\end{document}